\documentclass[11pt]{article}

\usepackage[final]{acl}

\usepackage{times}
\usepackage{latexsym}
\usepackage{algorithm}
\usepackage{algorithmic}
\usepackage{amsmath}
\usepackage{booktabs}
\usepackage{amssymb}
\usepackage{multirow}
\usepackage{graphicx}
\usepackage{tabularx}
\usepackage{enumitem}
\usepackage{array}
\usepackage{makecell}
\usepackage{amsthm}
\newtheorem{theorem}{Theorem}
\newtheorem{corollary}{Corollary}
\newcommand{\emap}{\kappa}

\usepackage[T1]{fontenc}

\usepackage[utf8]{inputenc}

\usepackage{microtype}

\usepackage{inconsolata}

\usepackage{graphicx}

\title{Structure Guided Retrieval-Augmented Generation for Factual Queries}

\author{
  Miao Xie\textsuperscript{1}\thanks{Corresponding authors.},
  Xiao Zhang\textsuperscript{1},
  Yi Li\textsuperscript{2},
  Chunli Lv\textsuperscript{1,3}\footnotemark[1] \\
  \textsuperscript{1}College of Information and Electrical Engineering, China Agricultural University, China \\
  \textsuperscript{2}College of Computing and Data Science, Nanyang Technological University, Singapore \\
  \textsuperscript{3}Key Laboratory of Agricultural Informatization Standardization, Ministry of Agriculture and Rural Affairs, China \\
  \texttt{0520shui@163.com, B20243080802@cau.edu.cn, liyi0067@e.ntu.edu.sg, lvcl@cau.edu.cn}
}

\begin{document}
\maketitle
\begin{abstract}
Retrieval-Augmented Generation (RAG) has been proposed to mitigate hallucinations in large language models (LLMs), where generated outputs may be factually incorrect.
However, existing RAG approaches predominantly rely on vector similarity for retrieval, which is prone to semantic noise and fails to ensure that generated responses fully satisfy the complex conditions specified by factual queries.
To address this challenge, we introduce a novel research problem, named \textsc{\textbf{E}xact \textbf{R}etrieval \textbf{P}roblem (ERP)}. To the best of our knowledge, this is the first problem formulation that explicitly incorporates structural information into RAG for factual questions to satisfy all query conditions. For this novel problem, we propose \textsc{\textbf{S}tructure \textbf{G}uided \textbf{R}etrieval-\textbf{A}ugmented \textbf{G}eneration (SG-RAG)}\footnote{Code is publicly available at: \url{https://github.com/CAU-X-AI-Lab/SG-RAG}}, 
which models the retrieval process as an embedding-based subgraph matching task, and uses the retrieved topological structures to guide the LLM to generate answers that meet all specified query conditions.
To facilitate evaluation of \textsc{ERP}, we construct and publicly release \textsc{\textbf{E}xact \textbf{R}etrieval \textbf{Q}uestion \textbf{A}nswering (ERQA)}\footnote{Dataset is publicly available at: \url{https://github.com/CAU-X-AI-Lab/ERQA}}
, a large-scale dataset comprising $120{,}000$ fact-oriented QA pairs, each involving complex conditions, spanning $20$ diverse domains.
The experimental results demonstrate that \textsc{SG-RAG} significantly outperforms strong baselines on \textsc{ERQA}, delivering absolute gains of $20.68$--$50.88$ percentage points, corresponding to $34$\%--$450$\% relative improvements across metrics, while maintaining reasonable computational overhead.

\end{abstract}

\section{Introduction}
LLMs suffer from a well-documented limitation: they often produce hallucinations that are fluent but factually incorrect~\cite{huang2025survey}. 
Such hallucinations have been observed in medical QA~\cite{pal2023med}, legal drafting~\cite{curran2023hallucination}, and scientific writing~\cite{sui2024confabulation}, where factual errors may mislead users and undermine trust~\cite{luo2024hallucination}.
RAG alleviates hallucinations by incorporating external knowledge, shows strong performance in QA and assistant tasks. Current RAG methods can be categorized into two paradigms: chunk-based RAG, represented by NaiveRAG~\cite{lewis2020retrieval}, and graph-based RAG, exemplified by GraphRAG~\cite{edge2024local}. Both paradigms have been adopted in real-world systems such as Dify~\cite{arai2024design} and LangChain~\cite{topsakal2023creating}.

\begin{figure}[!b]
\centering
\includegraphics[width=0.99\columnwidth]{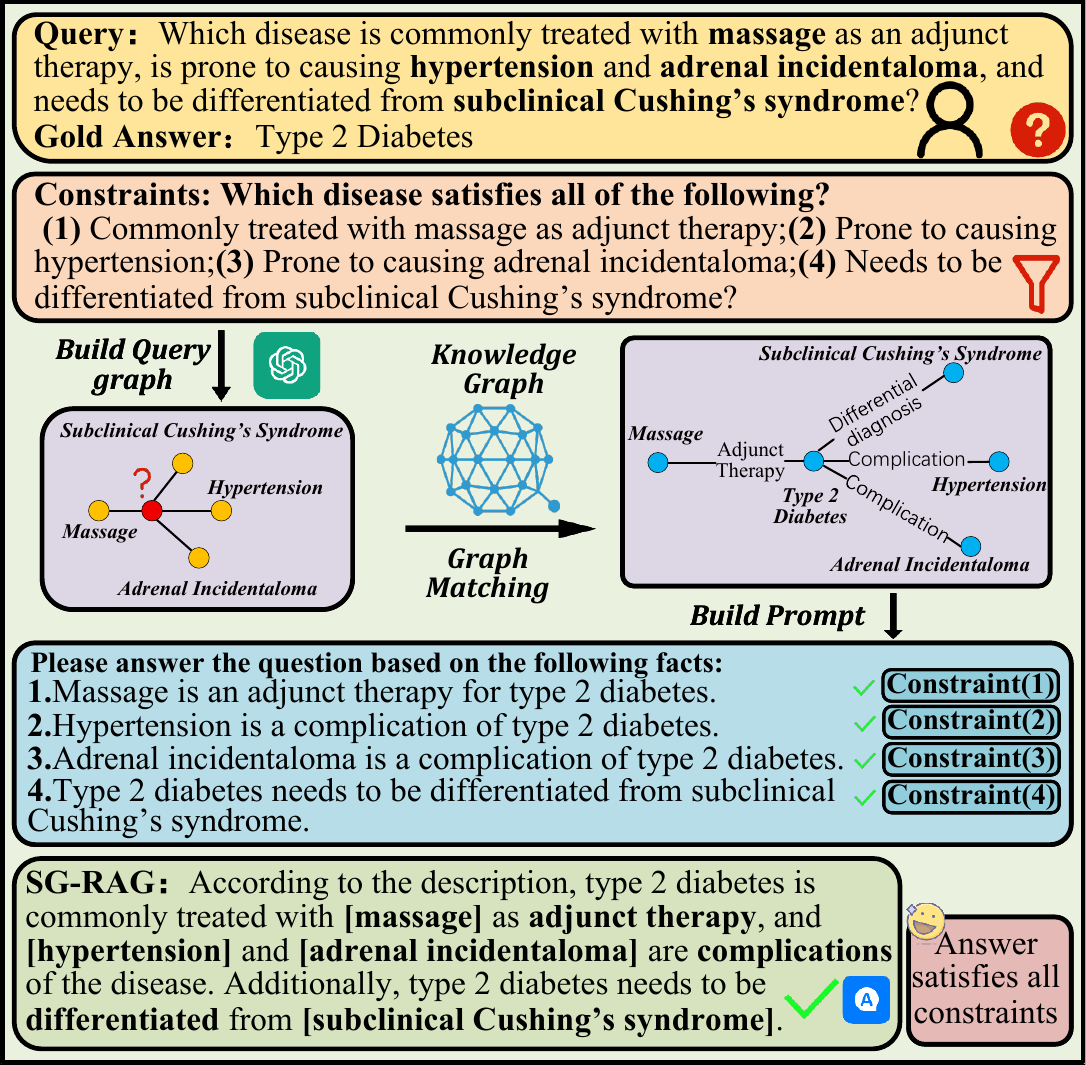} 
\caption{Multi-Condition QA Example.}
\label{fig1}
\end{figure}

However, in real-world applications, many fact-oriented queries require that answers satisfy all conditions in the query. As shown in Fig~\ref{fig1}, in a medical QA scenario, a user may ask:
\textit{``Which disease commonly uses massage as adjuvant therapy, is prone to cause hypertension and adrenal incidentaloma, and requires differential diagnosis from subclinical Cushing's syndrome?''}
This query imposes four distinct conditions on the target disease: (1) massage is used as adjuvant therapy; (2) it is prone to induce hypertension; (3) it tends to cause adrenal incidentaloma; (4) it requires differential diagnosis from subclinical Cushing's syndrome. Each condition represents a constraint, and collectively they define a complex query intent. Unlike single constraint queries, such multi-constraint queries demand that a correct answer satisfy all specified constraints.
Current RAG methods such as NaiveRAG and GraphRAG often struggle to achieve this objective. The limitation lies in their reliance on semantic similarity to rank retrieval results and generate responses based on local text chunks or subgraphs from external knowledge sources. Thus, they fail to retrieve the full set of supporting information required to satisfy all the constraints in the query, leading to incomplete answers.

To address this gap, we define a novel research problem called \textsc{\textbf{E}xact \textbf{R}etrieval \textbf{P}roblem (ERP)}. Given a user query involving multiple constraints, the goal of \textsc{ERP} is to retrieve information that precisely and comprehensively satisfies all specified conditions, from external knowledge sources, ensuring that the generated answer complies with every constraint in the query.

To solve the \textsc{Exact Retrieval Problem}, we propose \textsc{\textbf{S}tructure \textbf{G}uided \textbf{R}etrieval-\textbf{A}ugmented \textbf{G}eneration(SG-RAG)}.
Unlike existing RAG methods that rely on vector similarity, \textsc{SG-RAG} introduces a retrieval mechanism guided by structure that better respects the multi-constraint structure of the query.
The core idea is to model knowledge retrieval as a subgraph matching task based on embeddings, enabling the system to precisely and comprehensively retrieve information from external knowledge sources that satisfy the constraints specified in the query. The information of the retrieved subgraphs is then converted into prompts that guide the LLM in generating answers that satisfy all constraints. In the example of Figure~\ref{fig1}, \textsc{SG-RAG} retrieves a subgraph that satisfies all query constraints and produces the correct answer, \emph{Type 2 diabetes}.

To systematically evaluate the ability of \textsc{SG-RAG} to solve \textsc{ERP}, we construct and publicly release a large-scale benchmark dataset called \textsc{ERQA (\textbf{E}xact \textbf{R}etrieval \textbf{Q}uestion \textbf{A}nswering)}.
\textsc{ERQA} consists of three subsets built from diverse domains: (1) \textsc{FB-ERQA}, an English encyclopedic graph contributing $80{,}000$ queries; (2) \textsc{UD-ERQA}, a multi-disciplinary English dataset built from textbooks covering $18$ distinct domains, contributing $10{,}000$ queries; and (3) \textsc{CM-ERQA}, a Chinese medical knowledge graph contributing $30{,}000$ queries. Each query includes multiple constraints, with a ground-truth answer for evaluation.

Based on \textsc{ERQA}, we compare \textsc{SG-RAG} with several RAG baselines. \textsc{SG-RAG} achieves significant and consistent gains across metrics, yielding $20.68$--$50.88$ percentage-point absolute improvements, corresponding to $34$\% to over $450$\% relative gains, while maintaining reasonable computational efficiency.

Our main contributions are as follows.

\begin{itemize}[leftmargin=*, itemsep=1pt, topsep=2pt, parsep=0pt, partopsep=0pt]
    \item To address the challenges observed in real-world applications, we propose a novel RAG query paradigm, called \textsc{ERP}.
    \item To solve the \textsc{ERP}, we propose \textsc{SG-RAG}, a method that leverages an embedding-based subgraph matching mechanism.
    \item To enable a systematic evaluation of \textsc{SG-RAG}, we construct and \textbf{publicly} release \textsc{ERQA}, a benchmark containing $120,000$ factual QA pairs across $20$ domains.
    \item On \textsc{ERQA}, \textsc{SG-RAG} outperforms strong baselines by $20.68$--$50.88$ points, corresponding to $34$\% to over $450$\% relative gains across metrics.

\end{itemize}

\section{Related Work}

\textbf{Retrieval-Augmented Generation} was introduced by \cite{lewis2020retrieval} to improve QA by integrating vector similarity-based retrieval with the generation of LLMs. Subsequent research has progressed in two major directions: (i) semantic alignment controlled by the query and (ii) structure-aware retrieval with explicit knowledge modeling.
In the first line, HyDE~\cite{gao2023precise} introduced hypothetical answer generation for backward retrieval. MEMORAG~\cite{qian2024memorag} incorporated a memory module for multi-turn coherence, while RQ-RA~\cite{chan2024rq} improved multi-hop QA via structured query rewriting.
In the second line, GraphRAG~\cite{edge2024local} pioneered entity graph integration and community-based paragraph retrieval. LightRAG~\cite{guo2024lightrag} reduced the cost of graph construction through a simplified structure. HopRAG~\cite{liu2025hoprag} introduced multi-hop subgraphs for long-range access, GRAG~\cite{xu2025align} focused on dynamic graph evolution, and G-Refer~\cite{li2025g} employed dominant embeddings with contrastive learning to enforce structural consistency.
For domain-specific QA, MedRAG~\cite{zhao2025medrag} integrated medical ontologies, HippoRAG~\cite{jimenez2024hipporag} mimicked hippocampal memory encoding, and AMAR~\cite{xu2025harnessing} enabled multi-view retrieval of entities, attributes, and paths.
In summary, existing RAG methods effectively mitigate LLM hallucination, but their reliance on vector similarity limits their ability to precisely handle real-world queries with complex conditions.
\textbf{Subgraph Matching} techniques fall into join-based and backtracking-based categories.
Join-based approaches include BiGJoin and its variants~\cite{ammar2018distributed} achieving worst-case optimal joins even in dynamic graphs, fractional-cover-based joins~\cite{ngo2018worst}, and distributed matching via Timely Dataflow~\cite{lai2019distributed}. Systems such as EmptyHeaded~\cite{aberger2017emptyheaded} utilize SIMD joins with high-level queries, while cost-based optimizers refine join orders~\cite{mhedhbi2019optimizing}. Hybrid methods such as RapidMatch~\cite{sun2020rapidmatch} merge joins with exploration; SEED~\cite{lai2016scalable} applies clique/star units with bushy joins; and TwinTwigJoin~\cite{lai2015scalable} achieves optimality on MapReduce. The visual and partial matching are addressed by FERRARI~\cite{wang2020ferrari,icde-XIE} and PANDA~\cite{xie2017panda, PANDA-SYS}, respectively.
Backtracking-based methods include VF3~\cite{carletti2015vf2} and VF2 Plus~\cite{carletti2017challenging} for dense and sparse graphs, QuickSI~\cite{shang2008taming} with optimized orderings, and redundancy reduction strategies via Cartesian product postponement~\cite{bi2016efficient} or algebraic pruning~\cite{he2008graphs}. CECI~\cite{bhattarai2019ceci} leverages embedding clusters, while parallelized exploration~\cite{sun2012efficient} scales to billion-node graphs. Pruning-free methods~\cite{bonnici2013subgraph} also perform well in biological networks. Recently, GNN-PE~\cite{ye2024efficient} enables exact subgraph matching using dominant path embeddings.
Although these methods improve scalability and efficiency, they typically assume fully labeled graphs. However, in RAG, the query target is often unknown, making such methods difficult to apply.

\begin{figure*}[!h]
\centering
\includegraphics[width=0.97\textwidth]{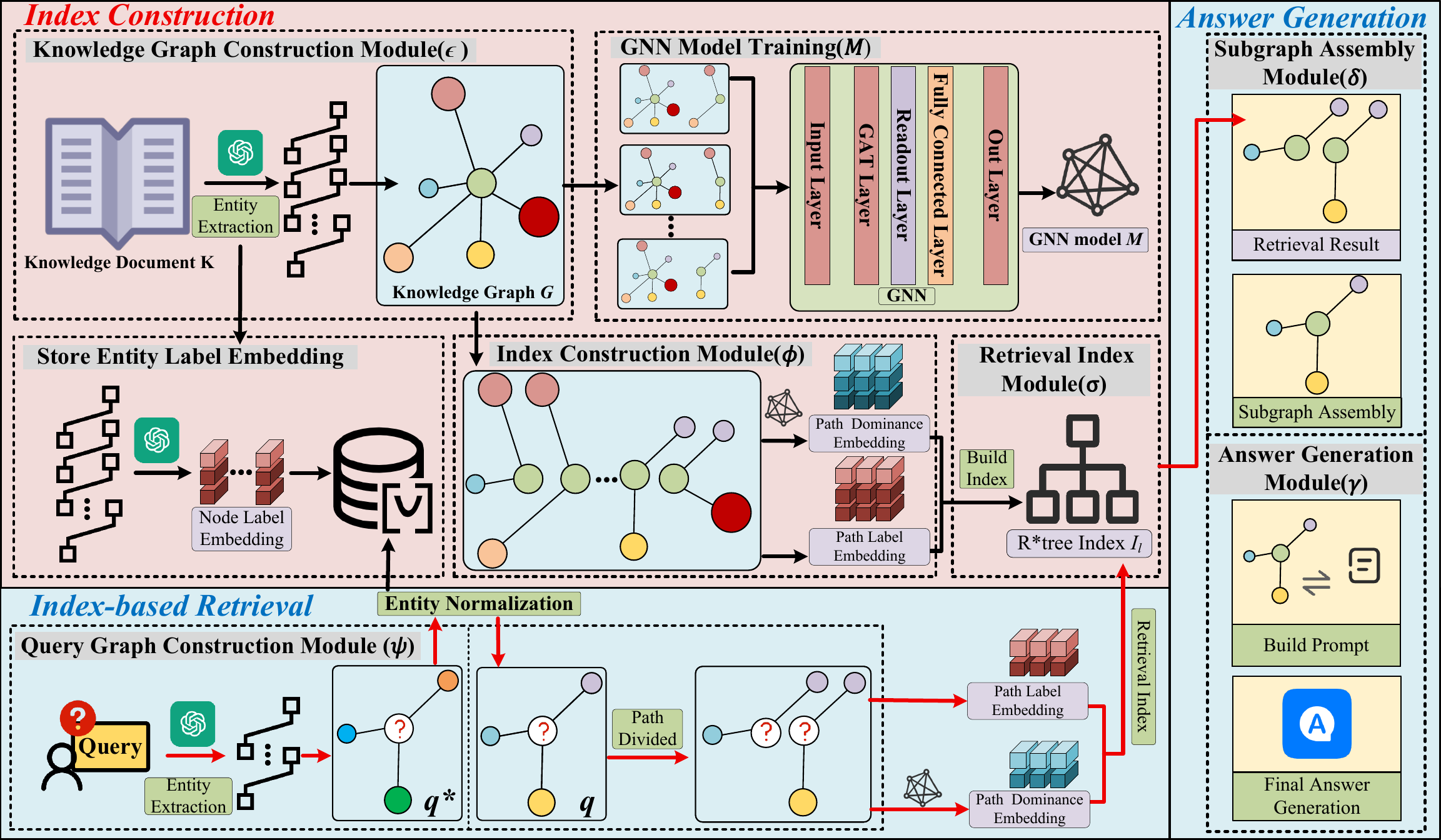} % Reduce the figure size so that it is slightly narrower than the column.
\caption{SG-RAG system architecture for structure guided retrieval and answer generation.}
\label{fig2}
\end{figure*}

\section{ Preliminaries and Problem Definition }
In this section, we present the core notations and formal definitions. Symbols are summarized in Appendix~\ref{appendix:Symbol Definitions}.

\noindent\textbf{Definition.}
A graph $G$ is a quadruple $(V, E, \emap, L)$, where: $V$ is a set of vertices, each $v_i \in V$ represented as $(\mathrm{id}(v_i), \ell(v_i), \xi(v_i))$, with unique id, semantic label, and textual description. $E$ is a set of directed edges $e_{ij} = (v_i, v_j)$, each encoded as $(\mathrm{id}(e_{ij}), v_i, v_j, \xi(e_{ij}))$. $\emap: V \times V \rightarrow E$ maps a pair of vertices to their edge. $L$ is a labeling function assigning each vertex $v_i \in V$ a label $\ell(v_i)$. Given two graphs $G_A$ and $G_B$, if they are isomorphic, we write $G_A \equiv G_B$~\cite{10.1145/2463676.2465300}.

\noindent\textbf{Problem Definition.}
\noindent\textsc{Exact Retrieval Problem (ERP):}
Given: a natural language query $q^{o}$ with a set of constraints $C = \{c_1, ..., c_k\}$ ($k \geq 2$); a knowledge corpus ${K} = \{k_1, ..., k_n\}$ and a large language model $\Lambda$, \textsc{ERP} aims to retrieve from ${K}$ the most accurate knowledge that satisfies all the constraints in $C$, and use it to prompt the $\Lambda$ to generate a factual answer. Each constraint $c_i$ typically corresponds to an entity related to the target answer. When such precise information cannot be retrieved, the goal is to identify relevant evidence to guide the $\Lambda$ in generating a reliable answer.

\section{Method}
\subsection{System Overview}
To address \textsc{ERP}, $q^{o}$ can be structured into a query graph $q$, and $K$ can be represented as a knowledge graph $G$. This enables \textsc{ERP} to be transformed into a subgraph matching task: Find a subgraph $g \subseteq G$ such that $g \equiv q$, ensuring that all query constraints are structurally satisfied.
However, since the subgraph isomorphism is \textbf{NP-complete}, applying it effectively in the RAG setting poses a significant challenge. 
To overcome this challenge, we propose \textsc{SG-RAG}. As shown in Figure~\ref{fig2}, we formalize \textsc{SG-RAG} as a tuple:
\begin{equation}
\small
    \text{SG-RAG} = (\epsilon, \mathcal{M}, \phi, \psi, \sigma, \delta, \gamma)
\end{equation}
where $\epsilon$: document structuring module that converts knowledge corpus $K$ into a knowledge graph $G$; $\mathcal{M}$: GNN model for generating path dominance embeddings; $\phi$: index construction module that builds R*-Tree $I_l$ over path embeddings; $\psi$: query graph construction module that extracts, normalizes, completes and decomposes $q^{o}$; $\sigma$: path-level retrieval module using $I_l$; $\delta$: subgraph assembly module that forms exact subgraphs; $\gamma$: answer generation module that prompts LLM with retrieved subgraphs.
The overall system execution is defined:

\begin{equation}
\small
    \text{SG-RAG}(q^{o}; K) = \gamma(q^{o}, \delta(\psi(q^{o})), \sigma(\phi(\mathcal{M}, \epsilon(K))))
\end{equation}
\textsc{SG-RAG} operates in three stages, as illustrated in Algorithm~\ref{alg:sg-rag-framework}:
\textbf{Index Construction} (Lines 1–6): the document corpus $K$ is converted into a knowledge graph $G$ via $\epsilon$. Label embeddings of entity and path $o_0(v_i)$ and $o_0(p_z)$ are calculated, and the GNN model $\mathcal{M}$ is trained. The path dominance embeddings $o(p_z)$ are computed and indexed by $\phi$.
\textbf{Index-based Retrieval} (Lines 7–16): the user query $q^{o}$ is parsed into a query graph $q^*$ with label embeddings $o_0(q_i)$. Entity normalization, path decomposition, and label completion are performed by $\psi$. Each completed query path set $Q' \in P$ is embedded in $\mathcal{M}$, and exact path candidates are retrieved using $\sigma$.
\textbf{Answer Generation} (Lines 17–27):  subgraphs are assembled by $\delta$, and used by $\gamma$ to generate the answer $a$ based on the best available subgraph.

\begin{algorithm}[!h]
\small
\caption{SG-RAG Framework}
\label{alg:sg-rag-framework}
\begin{algorithmic}[1]
\REQUIRE External documents $K$, User query $q^{o}$
\ENSURE Answer $a$
\STATE Extract knowledge graph $G \leftarrow \epsilon(K)$
\STATE Generate label embeddings $o_0(v_i)$ for all $v_i \in V(G)$ via LLM
\STATE Generate path label embeddings $o_0(p_z)$ for all paths $p_z$ of length $l$
\STATE Train GNN model $\mathcal{M}$ 
\STATE Compute path dominance embeddings $o(p_z)$ using $\mathcal{M}$
\STATE Build R*-Tree index $I_l$ over $o_0(p_z)$ and $o(p_z)$
\STATE Extract query graph $q^*$ from $q^{o}$ and generate $o_0(q_i)$ for all $q_i \in V(q^*)$ via LLM
\STATE Normalize entities in $q^*$ to obtain $q$
\STATE Decompose $q$ into path set $Q$ of length $l$
\STATE Compute $o_0(p_q)$ for each $p_q \in Q$ 
\STATE Predict possible labels using $I_l$
\STATE Complete $Q$ into fully labeled path sets $P$ 
\STATE $S \leftarrow \emptyset$
\FOR{each query path set $Q' \in P$}
    \STATE Compute $o(p_q)$ for each $p_q \in Q'$ using $\mathcal{M}$
    \STATE Retrieve $\text{cand\_list}$
    \STATE Assemble $g$ from $\text{cand\_list}$, add to $S$
\ENDFOR

\IF{$S \ne \emptyset$}
    \STATE Generate $a$ using information of $g \in S$ as prompt
\ELSE
    \STATE Construct fallback subgraph $g''$ by retrieving 1-hop neighbors of known entities in $q$
    \STATE Generate $a$ using information of $g''$ as prompt
\ENDIF
\RETURN $a$
\end{algorithmic}
\end{algorithm}

\subsection{Index Construction}

This stage corresponds to the components $\epsilon$, $\mathcal{M}$, and $\phi$ in the formal definition of the \textsc{SG-RAG}. 

\paragraph{Generation of the Node and Path Label Embedding.}
\textsc{SG-RAG} first converts the raw corpus $K$ into a structured knowledge graph $G$, where semantically meaningful entities are represented as nodes and explicit relations as edges (see Appendix~\ref{appendix:Graph Construction} for details). The node label embeddings are then computed using a pre-trained LLM. For each node $v_i \in V(G)$ with the label $\ell(v_i)$, the embedding is defined as $o_0(v_i) = \text{LLM}(\ell(v_i)) \in \mathbb{R}^F$. 
To represent the semantics of the path, we define the label embedding of a path $p_z = [v_1, v_2, \dots, v_k]$ as the concatenation of its embeddings of the constituent node: $o_0(p_z) = [o_0(v_1), o_0(v_2), \dots, o_0(v_k)] \in \mathbb{R}^{k \cdot F}$. It preserves both semantic content and node order for alignment during path-level retrieval.

\paragraph{Training of the Dominant Embedding Model.}
To enable exact subgraph matching, \textsc{SG-RAG} introduces a dominant embedding mechanism based on GNN. This mechanism learns structural representations of each node and its surrounding subgraph. For each node $v_i$, a 1-hop star subgraph $g_{v_i}$ is constructed and encoded using a Graph Attention Network (GAT)-based architecture. The final dominant embedding $o(v_i)$ captures the topological context around $v_i$.
To ensure structural containment, we enforce a dominance constraint: for any suitable substructure $s_{v_i} \subset g_{v_i}$, we require $o(s_{v_i}) \preceq o(g_{v_i})$. To implement this constraint during training, we introduce the following loss function:
\begin{equation}
\small
    \mathcal{L} = \sum \left\| \max(0, o(g_{v_i}) - o(s_{v_i})) \right\|_2^2
\end{equation}
This loss penalizes any violation of the dominance condition, encouraging the GNN to learn embeddings where substructures are embedded in semantically smaller regions than their supersets.
For a path $p_z = [v_1, \dots, v_k]$, the dominant embedding at the path level is computed as the concatenation $o(p_z) = [o(v_1), \dots, o(v_k)]$. These embeddings are used for pruning and matching: a query path $p_q$ is considered a match if $o(p_q) \preceq o(p_z)$ in all dimensions, indicating that the candidate structurally contains the query.
This embedding formulation enables efficient structure-aligned filtering in the retrieval phase. See Appendix~\ref{appendix:gnn-dominant} for full training details and architectural illustration.

\paragraph{Construction of the Path Index.}
To enable path-based subgraph retrieval, \textsc{SG-RAG} constructs an R*-Tree index over path embeddings.
\begin{equation}
\small
    I_l = \text{R*-Tree}\left(\left\{ o_0(p_z),\; o(p_z)\;|\; |p_z| = l \right\}\right),
\end{equation}
where each path $p_z$ of length $l$ is encoded by both its semantic embedding $o_0(p_z)$ and its structural embedding $o(p_z)$.
These embeddings are derived as follows: $o_0(p_z)$ is the concatenation of the node label embeddings to capture the semantics of the path, while $o(p_z)$ is formed by concatenating the dominant embeddings of all nodes along the path $p_z$, where the dominant embedding of each node is calculated using the pre-trained GNN model $\mathcal{M}$. The resulting embedding pairs $(o_0(p_z), o(p_z))$ are indexed for efficient retrieval.
Each index node stores different content based on type: \textbf{Leaf node:} stores $o_0(p_z)$ and $o(p_z)$;
\textbf{Non-leaf node:} stores minimum bounding rectangles ($MBRs$) over $o_0(p_z)$ and $o(p_z)$ for semantic and structural filtering, respectively.
At query time, the index is traversed in a heap-based best-first manner. Nodes whose $MBRs$ do not overlap with the query embedding region are pruned early, accelerating candidate filtering.

\subsection{Index-Based Retrieval}

This stage corresponds to the components $\psi$ and $\sigma$ in the formal definition of the \textsc{SG-RAG}. 
\paragraph{Query Graph Extraction and Entity Normalization.}
As part of the $\psi$ module, \textsc{SG-RAG} first transforms a query in natural language $q^{o}$ into a structured query graph $q^*$ using an LLM to extract entities and relations. It also computes the label embeddings $o_0(q_i^*) = \text{LLM}(\ell(q_i^*)) \in \mathbb{R}^F$ for each query node $q_i^*$.
To ensure consistency with the knowledge graph $G$, \textsc{SG-RAG} performs entity normalization using FAISS~\cite{douze2024faiss}: for each $q_i^*$, the most semantically similar node $q_i \in V(G)$ is retrieved and used to replace $q_i^*$, producing a normalized query graph $q$. Prompt and normalization algorithm are in Appendix~\ref{appendix:entity-normalization}.

\paragraph{Path Decomposition and Label Completion.}
To enable path-level retrieval, \textsc{SG-RAG} decomposes the query graph $q$ into a set of linear query paths and completes unknown node labels to produce fully specified path sets $P$.
We employ a cost-aware decomposition algorithm to iteratively select query paths of length $l$ that minimize edge overlap and retrieval cost. Each path is scored using a degree-based path weight and the optimal path set $Q$ is selected to fully cover $E(q)$.
For paths containing unknown nodes, \textsc{SG-RAG} performs wildcard-based label completion by traversing the R*-Tree index $I_l$ to identify candidate labels, generating a mapping $U$ from unknown nodes to candidate label sets.
All possible label combinations from $U$ are instantiated in $Q$ to form completed query path sets $P = \{ Q_1', Q_2', \ldots, Q_n' \}$ for downstream matching.
The complete procedures are in Appendix~\ref{appendix:path-completion}.

\paragraph{Path-level Retrieval.}
The module $\sigma$ performs path-level retrieval on the $I_l$ for each fully labeled query path $p_q \in P$.
Each query path is encoded with: a semantic embedding $o_0(p_q)$ through the LLM; a structural embedding $o(p_q)$ via the GNN model $\mathcal{M}$.
\textsc{SG-RAG} uses a heap-based best-first traversal strategy over $I_l$. During traversal:
\textbf{At Non-leaf nodes}, for each query path $p_q$ and each child node $N_i$, the system checks two constraints: \textbf{Semantic constraint}: whether $o_0(p_q)$ intersects with the label MBR $\text{MBR}_0(N_i)$; \textbf{Structural constraint}: whether the dominance region
    \begin{equation}
    \small
            \mathrm{DR}(o(p_q)) = \{ z \in \mathbb{R}^d \mid o(p_q)[i] \leq z[i], \forall i \}
    \end{equation}
    overlaps with $\text{MBR}(N_i)$.
Only when both constraints are satisfied is $p_q$ forwarded to $N_i$.
\textbf{At leaf nodes}, for each stored path $p_z$, it is added to the result based on \textbf{Exact match}: if $o_0(p_q) = o_0(p_z)$ and $o(p_q) \preceq o(p_z)$.
This dual filtering strategy ensures both semantic alignment and structural inclusion. All valid candidates are accumulated in $p_q.\text{cand\_list}$. The algorithm is in Appendix~\ref{appendix:exact subgraph matching algorithm}.

\subsection{Answer Generation}
\paragraph{Subgraph Assembly.}
The module $\delta$ composes candidates for full subgraphs from path-level matches to produce a set: exact matches $S$ that are isomorphic to the query graph.
For exact matches, \textsc{SG-RAG} enumerates all combinations of candidate paths from $p_q.\text{cand\_list}$ and verifies whether they can be merged into a conflict-free subgraph. 
This process enables \textsc{SG-RAG} to reconstruct structurally consistent subgraphs that satisfy all query constraints.
The assembly logic is in Appendix~\ref{appendix:assembly algorithm}.

\paragraph{Generate Answer.}
The module $\gamma$ generates answers based on matched subgraphs and supports two strategies.
If the exact match set $S$ is nonempty, \textsc{SG-RAG} extracts semantic labels and descriptions from all subgraphs $g \in S$ and combines them with the original query $q^{o}$ to construct a structured prompt for the LLM, ensuring all constraints are satisfied.
When exact subgraphs are not found, \textsc{SG-RAG} falls back to building a 1-hop neighborhood subgraph $g''$ around entities mentioned in $q^{o}$ as the context of last resort.
This ensures \textsc{SG-RAG} can still produce informative responses.
The prompt construction process is in Appendix~\ref{appendix:subgraph-inference-prompt}.

\subsection{Time Complexity Analysis}
We analyze the time complexity of the \textsc{SG-RAG} retrieval phase, which consists of three components: entity normalization, unknown label completion, and exact subgraph matching.
The overall retrieval complexity is the following.

\begin{equation}
\small
\begin{aligned}
O\Bigg(
& |V(q)| \cdot \log N + \sum_{i=0}^{h} |Q_u| \cdot f^{\,h-i+1} \cdot (1 - PP_i) \\
& + \Bigl( \prod_{j=1}^{u} t_j \Bigr) \cdot
    \sum_{i=0}^{h} |Q| \cdot f^{\,h-i+1} \cdot (1 - PP_i)
\Bigg).
\end{aligned}
\end{equation}

where $|V(q)|$ is the number of nodes in the query graph, $N$ is the number of entities in the knowledge graph, $u$ is the number of unknown nodes, $t_j$ is the candidate label count for node $j$, and $h, f, PP_i$ denote the height, fan-out, and pruning ratio of the R*-Tree.
In practice, most queries involve only one unknown node with a small number of candidates ($u=1$, $t_1 \leq 10$), simplifying the complexity to:
\begin{equation}
\small
    O\bigg( |V(q)| \cdot \log N \\ + (t_1 + 1) \cdot \sum_{i=0}^{h} |Q| \cdot f^{h - i + 1} \cdot (1 - PP_i) \bigg)
\end{equation}

This enables near-linear scalability, with modular execution and efficient index pruning ensuring low overhead. See Appendix~\ref{appendix:CA} for proof details.
\begin{table}[!b]
\centering
\small
\caption{Human validation results on sampled ERQA}
\setlength{\tabcolsep}{4pt}
\begin{tabular}{lccc}
\toprule
\textbf{Split} & \makecell[c]{\textbf{Fluency}\\\textbf{(1--5)}} & \makecell[c]{\textbf{Answerable}\\\textbf{(\%)}} & \makecell[c]{\textbf{Ambiguity}\\\textbf{(\%) $\downarrow$}} \\
\midrule
Chinese & 4.7 & 98.0 & 2.0 \\
English & 4.6 & 98.4 & 1.6 \\
Overall & 4.65 & 98.2 & 1.8 \\
\bottomrule
\end{tabular}
\label{tab:human_validation_erqa}
\end{table}

\begin{table*}[!h]
\centering
\small
\caption{Performance Comparison of SG-RAG and Baseline Methods on ERQA Subsets}
\setlength{\tabcolsep}{1pt}
\resizebox{\textwidth}{!}{%
\begin{tabular}{lccccccccccccccc}
\toprule
\multirow{2}{*}{\textbf{}} & \multicolumn{5}{c}{\textbf{FB-ERQA}} & \multicolumn{5}{c}{\textbf{CM-ERQA}} & \multicolumn{5}{c}{\textbf{UD-ERQA}} \\
\cmidrule(lr){2-6} \cmidrule(lr){7-11} \cmidrule(lr){12-16}
 & Hit@1 & Precision & Recall & F1 & Emp.S & Hit@1 & Precision & Recall & F1 & Emp.S & Hit@1 & Precision & Recall & F1 & Emp.S\\
\midrule
GPT-5.1       & 19.1\%  & 17.8\% & 19.2\% & 18.5\% & 0.412& 6.5\%   & 15.3\%  & 6.5\%   & 9.1\% & 0.192& 10.0\% & 8.4\%  & 11.1\% & 9.6\% & 0.287 \\
NaiveRAG          & 14.8\%  & 14.5\% & 15.1\% & 14.8\% &0.781& 14.2\%  & 8.2\%  & 17.2\% & 11.1\% &0.564& 14.4\% & 12.2\% & 15.7\% & 13.7\% & 0.681\\
RAPTOR            & 61.1\% & 60.5\% & 61.1\% & 60.8\% &0.128& 9.5\%   & 8.7\%  & 9.5\%   & 9.1\% &0.095& 10.7\% & 9.7\%  & 11.4\% & 10.5\% & 0.115\\
GraphRAG          & 61.8\%  & 61.7\% & 61.8\% & 61.8\% &1.432& 20.2\%  & 12.7\% & 22.3\% & 16.2\% & 1.007 & 18.8\% & 14.1\% & 20.0\% & 16.5\% &1.204\\
LightRAG          & 20.3\%  & 20.0\% & 20.4\% & 20.2\% & 0.934& 14.1\%  & 8.3\%  & 17.0\%  & 11.2\% &0.734& 17.6\% & 14.0\% & 18.8\% & 16.1\% &0.795\\
SubgraphRAG          & 33.4\%  & 33.6\% & 33.8\% & 33.7\% & 1.323& 30.1\%  & 29.8\%  & 32.9\%  & 31.3\% &1.210& 35.7\%  & 35.9\% & 36.1\% & 36.0\% & 1.273\\
HyperGraphRAG          & 30.6\%  & 31.4\% & 31.5\% & 31.4\% & 1.005& 21.4\%  & 17.3\%  & 23.9\%  & 20.1\% &0.986& 28.3\%  & 28.8\% & 29.1\% & 28.9\% & 0.996\\
LinearRAG          & 25.2\%  & 25.7\% & 26.1\% & 25.9\% & 0.867& 15.9\%  & 16.2\%  & 18.3\%  & 17.2\% &0.831& 23.6\%  & 23.5\% & 23.8\% & 23.6\% & 0.912\\
DyPRAG          & 28.5\%  & 29.0\% & 29.2\% & 29.1\% & 0.949& 19.1\%  & 20.5\%  & 21.9\%  & 21.2\% &0.908&  21.5\%  & 21.1\% & 22.9\% & 22.0\%& 0.909\\
KAG          & 40.7\%  & 40.7\% & 40.8\% & 40.8\% & 1.396&30.2\%  & 30.7\%  & 30.2\%  & 30.4\% &1.407& 24.8\%  & 25.1\% & 25.0\% & 25.0\% & 1.278\\
\textbf{SG-RAG} & \textbf{82.5\%} & \textbf{82.4\%} & \textbf{82.6\%} & \textbf{82.5\%} &\textbf{1.855}& \textbf{61.1\%} & \textbf{60.5\%} & \textbf{61.2\%} & \textbf{60.8\%} & \textbf{1.641}&\textbf{61.8\%} & \textbf{65.0\%} & \textbf{67.5\%} & \textbf{66.2\%} &\textbf{1.611} \\
\bottomrule
\end{tabular}%
}
\label{tab:erqa-comparison}
\end{table*}

\section{Experimental Study}
\subsection{Dataset}

Since most existing open RAG and QA benchmarks target single constraint queries (e.g., Natural Questions (NQ)~\cite{kwiatkowski2019natural}) and rarely verify constraint satisfaction with verifiable evidence, we construct and publicly release ERQA to systematically evaluate SG-RAG on ERP.
\textsc{ERQA} comprises three subsets: \textsc{FB-ERQA} contains $80{,}000$ English QA pairs derived from FB15K-237 with $6.1$ constraints per query on average; \textsc{UD-ERQA} contains $10{,}000$ QA pairs from a cross-domain academic corpus spanning $18$ disciplines with $4.7$ constraints per query on average; \textsc{CM-ERQA} contains $30{,}000$ Chinese QA pairs from the CPubMed-KG~\cite{zhang2025much} with $5.4$ constraints per query on average. 
To further assess the realism of ERQA queries, we conduct a human validation study on $2,000$ randomly sampled queries. Three expert annotators, including one PhD in Nutrition and two PhDs in Computer Science, independently evaluate each query in terms of fluency, answerability, and ambiguity. For cases that are difficult to judge, the annotators verify them through literature search. The results (Table~\ref{tab:human_validation_erqa}) show high fluency , high answerability , and low ambiguity, suggesting that \textsc{ERQA} queries are generally natural, understandable, and answerable. Detailed statistics, example queries, and the question-generation prompt are provided in Appendix~\ref{appendix:erqa} and Appendix~\ref{appendix:question-generation-prompt}.
We evaluate \textsc{SG-RAG} on \textsc{ERQA} for \textsc{ERP} and additionally on NQ dataset to verify its generalization to single-constraint queries.

\subsection{Experimental Setting}

\textbf{Environment.} All experiments were conducted on a local workstation running Ubuntu $22.04$ LTS. The hardware configuration includes a $16$ core, $32$ thread Intel Core\texttrademark~CPU and an NVIDIA GeForce RTX $4060$ GPU (driver version $560.94$, CUDA $12.6$).
\textbf{Model Configuration.} During dataset construction, we employed the \texttt{GLM-4-Flash} to generate structured queries.
% For all vector embedding tasks, we used  \texttt{text-embedding-3-small}  to maintain consistency between methods.
For all vector embedding tasks, we used \texttt{text-embedding-3-small} to maintain consistency between methods.
We fix the model and the text preprocessing pipeline.
In the answer generation stage, we adopted \texttt{GPT-4o}  with a context window fixed at $1,200$ tokens per query.
\textbf{Baseline Methods.} To comprehensively evaluate the generation performance of \textsc{SG-RAG}, we compare it against ten representative retrieval-augmented or generation-based baselines: NaiveRAG~\cite{lewis2020retrieval}, RAPTOR ~\cite{sarthi2024raptor}, GraphRAG~\cite{edge2024local}, LightRAG~\cite{guo2024lightrag}, SubgraphRAG~\cite{li2025simpleeffectiverolesgraphs}, HyperGraphRAG~\cite{luo2025hypergraphragretrievalaugmentedgenerationhypergraphstructured}, LinearRAG~\cite{zhuang2025linearraglineargraphretrieval}, DyPRAG~\cite{tan2025dynamicparametricretrievalaugmented}, KAG~\cite{KAG}, GPT-5.1. 
% We additionally design two naive baselines to examine whether the performance gain of \textsc{SG-RAG} indeed stems from its structure-guided exact retrieval mechanism. 
We additionally design two naive baselines to isolate the effect of structure guided exact retrieval. A constraint wise evidence union alternative can be strong, but we omit it because its performance is sensitive to evidence aggregation under a fixed context budget.
\textit{Entity-based Retrieval} returns chunks that mention any query entity.
\textit{2-hop Graph Retrieval} expands each query entity to its two-hop neighbors.
\textbf{Evaluation Metrics.} We adopt a hybrid evaluation that combines objective metrics and subjective scoring to assess the accuracy and reasoning quality of the answer. We use four widely adopted metrics:
Precision, Recall, F1 Score, Hit@1 as the objective metrics.
We adopt \emph{Empowerment Score} (Emp.S) as \cite{guo2024lightrag}, a subjective metric, to evaluate the answers' reasoning quality (details in Appendix~\ref{appendix:emp-score}).

\begin{table}[!t]
\centering
\caption{Performance under Varying Constraints}
\small
% \begin{tabular}{@{}p{1.3cm}p{0.65cm}p{0.65cm}p{0.65cm}p{0.65cm}p{0.65cm}p{0.65cm}@{}}
\begin{tabular}{@{}>{\raggedright\arraybackslash}p{1.3cm}p{0.65cm}p{0.65cm}p{0.65cm}p{0.65cm}p{0.65cm}p{0.65cm}@{}}
\toprule
\textbf{} & \multicolumn{2}{c}{4 Constraints} & \multicolumn{2}{c}{5 Constraints} & \multicolumn{2}{c}{6 Constraints} \\
\cmidrule(lr){2-3} \cmidrule(lr){4-5} \cmidrule(lr){6-7}
& Recall & Hit@1 & Recall & Hit@1 & Recall & Hit@1 \\
\midrule
GPT-5.1         & 9.2\% & 9.0\% & 11.0\% & 10.8\% & 9.1\% & 9.0\% \\
NaiveRAG        & 16.1\% & 15.2\% & 16.2\% & 15.2\% & 15.4\% & 14.7\% \\
RAPTOR          & 29.6\% & 29.1\% & 29.5\% & 29.3\% & 28.7\% & 28.4\% \\
GraphRAG        & 37.7\% & 36.9\% & 37.6\% & 36.1\% & 36.3\% & 36.2\% \\
LightRAG        & 21.5\% & 20.0\% & 17.7\% & 17.1\% & 17.5\% & 17.0\% \\
\makecell[l]{Subgraph\\RAG}        & 42.3\% & 41.8\% & 42.5\% & 42.0\% & 39.9\% & 39.5\% \\
\makecell[l]{HyperGraph\\RAG}        & 35.8\% & 35.4\% & 33.1\% & 32.9\% & 33.3\% & 32.9\% \\
LinearRAG        & 27.1\% & 26.7\% & 26.4\% & 26.2\% & 25.8\% & 24.6\% \\
DyPRAG        & 24.9\% & 24.5\% & 25.2\% & 24.7\% & 22.6\% & 22.3\% \\
KAG        & 32.4\% & 32.3\% & 38.7\% & 37.8\% & 31.6\% & 31.0\% \\
\textbf{SG-RAG} & \textbf{72.1\%} & \textbf{71.2\%} & \textbf{69.1\%} & \textbf{68.4\%} & \textbf{70.5\%} & \textbf{69.1\%} \\
\bottomrule
\end{tabular}
\label{tab:constraint-performance}
\end{table}

\subsection{Experimental Results Analysis}
\paragraph{Performance Analysis.}
\textbf{Overall.} As shown in Table~\ref{tab:erqa-comparison}, \textsc{SG-RAG} consistently outperforms all baselines for all metrics on the three \textsc{ERQA} subsets.
For Hit@1, \textsc{SG-RAG} achieves relative gains of $34$\% (\textsc{FB-ERQA}), $102$\% (\textsc{CM-ERQA}), and $73$\% (\textsc{UD-ERQA}) over the strongest baseline, demonstrating a substantially higher probability of retrieving the correct answer on the first attempt.
For Recall, relative improvements over the strongest baseline reach $35$\%(\textsc{FB-ERQA}), $86$\%(\textsc{CM-ERQA}), and $87$\%(\textsc{UD-ERQA}), indicating substantially stronger coverage of relevant knowledge.
For F1, relative improvements over the strongest baseline of $34$\% (\textsc{FB-ERQA}), $94$\% (\textsc{CM-ERQA}), and $84$\% (\textsc{UD-ERQA}) show that \textsc{SG-RAG} achieves a more balanced trade-off between precision and recall.
Table~\ref{tab:erqa-comparison} also reports the subjective evaluation results (Emp.S) on all subsets, where GPT-5.2 is used as the evaluator. \textsc{SG-RAG} again achieves the highest Emp.S on all three subsets, exceeding the strongest baseline by $0.423$, $0.234$, and $0.333$, respectively, which indicates that it not only improves answer accuracy but also produces more coherent and informative reasoning.

\textbf{Robustness.} 
\begin{table}[!t]
\centering
\caption{Robustness under Graph Incompleteness}
\small
\begin{tabular}{lccc}
\toprule
\textbf{} & \textbf{$x=1$} & \textbf{$x=2$} & \textbf{$x=3$} \\
\midrule
Recall & 80.10\% & 79.60\% & 43.50\% \\
Hit@1  & 80.00\% & 79.20\% & 43.30\% \\
\bottomrule
\end{tabular}
\label{tab:graph-incomplete}
\end{table}
\begin{table}[!b]
\centering
\small
\caption{Performance Comparison with Naive Baselines}
\label{tab:naive}
\begin{tabular}{lccc}
\toprule
\textbf{Method} & \textbf{Precision} & \textbf{Recall} & \textbf{F1} \\
\midrule
 NaiveRAG & 8.23\% & 17.24\% & 11\% \\
 Entity-based Retrieval & 9.31\% & 16.84\% & 12\% \\
 LightRAG (1-hop) & 8.29\% & 17.00\% & 11\% \\
 2-hop Graph Retrieval & 8.71\% & 20.10\% & 12\% \\
\textbf{SG-RAG} & \textbf{60.47\%} & \textbf{61.20\%} & \textbf{61\%} \\
\bottomrule
\end{tabular}
\end{table}Table~\ref{tab:constraint-performance} reports performance under different numbers of query constraints. Since more constraints require the system to satisfy more conditions simultaneously, they represent more challenging retrieval settings. \textsc{SG-RAG} consistently remains the best-performing method across all settings. Even under six-constraint queries, \textsc{SG-RAG} still achieves a Hit@1 over $69$\%, corresponding to a $75$\% relative improvement over the strongest baseline, confirming its robustness in complex multi-constraint retrieval scenarios.
We further evaluate robustness under graph incompleteness caused by extraction errors. We randomly sample $1{,}000$ queries from \textsc{ERQA} and construct perturbed query sets by (i) deleting entities or edges from the knowledge graph evidence, or (ii) adding spurious constraints, yielding graph edit distances~\cite{gao2010survey} $x\in\{1,2,3\}$ from the original query subgraph. As shown in Table~\ref{tab:graph-incomplete}: for $x=1$ and $x=2$, Recall stays at $80.1$\% and $79.6$\%, while Hit@1 remains $80.0$\% and $79.2$\%, respectively. When the perturbation becomes more severe ($x=3$), Recall and Hit@1 drop to $43.5$\% and $43.3$\%, as a substantial portion of the original query structure is lost. Since the average query graph in this subset contains about $6$ edges, removing $3$ edges corresponds to roughly $50$\% structural loss, leaving insufficient grounding for retrieval and generation. Nevertheless, even in this severe setting, \textsc{SG-RAG} still remains above GraphRAG on \textsc{CM-ERQA} ($22.3$\% Recall and $20.2$\% Hit@1), indicating tolerance to mild graph incompleteness and graceful degradation under severe structure loss.
\textbf{Ablation.} To verify that the gains of \textsc{SG-RAG} stem from structure-guided exact retrieval, we conduct a progressive comparison on \textsc{CM-ERQA} against naive variants spanning from purely semantic retrieval to shallow neighborhood expansion, including NaiveRAG, Entity-based Retrieval, LightRAG, and 2-hop graph retrieval. 
\begin{table}[!t]
\centering
\small
\caption{Performance on the NQ Dataset}
\label{tab:single_constraint}
\begin{tabular}{lcccc}
\toprule
\textbf{Method} & \textbf{Precision} & \textbf{Recall} & \textbf{F1} & \textbf{Hit@1} \\
\midrule
NaiveRAG  & 79.51\% & 80.27\% & 79.89\% & 79.51\% \\
GraphRAG  & 85.92\% & 87.88\% & 86.89\% & 85.92\% \\
LightRAG  & 88.41\% & 89.05\% & 88.73\% & 88.41\% \\
\textbf{SG-RAG}    & \textbf{89.33\%} & \textbf{89.49\%} & \textbf{89.41\%} & \textbf{89.33\%} \\
\bottomrule
\end{tabular}
\end{table}
\begin{figure}[!b]
\centering
\includegraphics[width=0.99\columnwidth]{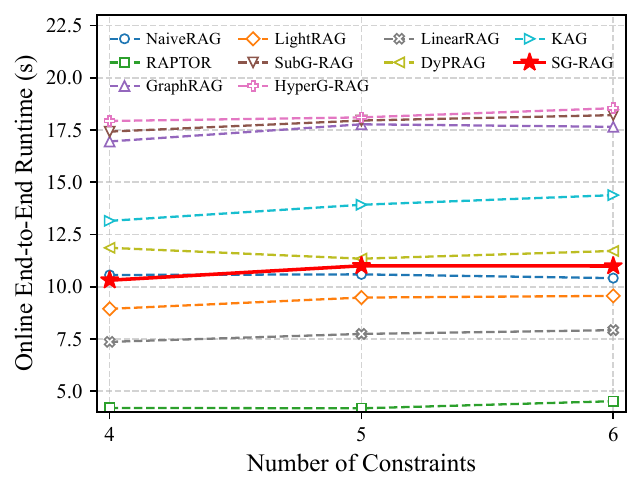} 
\caption{Online end-to-end runtime comparison.}
\label{fig-runtime}
\end{figure}
As shown in Table~\ref{tab:naive}, enlarging the retrieval scope brings only limited improvements, and these naive methods still remain in a low Precision and low F1 regime overall. 
In contrast, \textsc{SG-RAG} consistently achieves substantially higher Precision, Recall, and F1 than the naive baseline, indicating that entity co-occurrence signals and local neighborhoods are insufficient for satisfying multi-constraint queries.
\textbf{Generalization.} In addition, although \textsc{SG-RAG} is designed for multi-constraint queries, we further evaluate its robustness on single constraint queries using the NQ dataset, where queries can be represented as minimal graphs with two nodes and one edge.
As shown in Table~\ref{tab:single_constraint}, \textsc{SG-RAG} achieves the best performance, reaching $89.33$\% in Hit@1 and $89.41$\% in F1, slightly surpassing LightRAG by $0.92$ and $0.68$ points, respectively, and outperforming GraphRAG and NaiveRAG by $3.41$ and $9.82$ points in Hit@1.
These results indicate that \textsc{SG-RAG} remains stable and competitive even when the query contains one constraint.
\textbf{Case Study.} We include a case study, details are in Appendix~\ref{appendix:case study}.

\paragraph{Efficiency Analysis.}
\textbf{Online.} Figure~\ref{fig-runtime} shows the end-to-end runtimes of \textsc{SG-RAG} and the baselines under different levels of constraint. 
\textsc{SG-RAG} remains close to NaiveRAG and significantly outperforms GraphRAG. RAPTOR is the fastest, but its generation quality is the lowest, while \textsc{SG-RAG} offers a better balance between runtime and response quality. 
\textbf{Offline.} We further measure the index construction time of \textsc{SG-RAG} and other structure-aware RAG systems. 
As shown in Table~\ref{tab:offline_time}, the offline construction time is approximately $8$h for GraphRAG, $10$h for LightRAG, and $11$h for \textsc{SG-RAG} on \textsc{CM-ERQA}, indicating only a marginal overhead of about $1$ hour over LightRAG.

\begin{table}[!t]
\centering
\small
\caption{Offline index construction time across datasets.}
\label{tab:offline_time}
\begin{tabular}{lccc}
\toprule
\textbf{Dataset} & \textbf{GraphRAG} & \textbf{LightRAG} & \textbf{SG-RAG} \\
\midrule
\textsc{FB- ERQA} & 10h & 13h & 14h \\
\textsc{UD-ERQA} & 21h & 23h & 28h \\
\textsc{CM-ERQA} & 8h & 10h & 11h \\
\bottomrule
\end{tabular}
\end{table}

% \paragraph{Case Study.} We include representative queries as case study to demonstrate the reasoning capabilities of \textsc{SG-RAG}. For details, refer to Appendix H.

\begin{table}[!b]
\centering
\caption{Performance of Different Path Length $l$}
\small
\begin{tabular}{lcccc}
\toprule
\textbf{} & Hit@1 & Precision & Recall & F1 \\
\midrule
l=1 & 71.12\% & 71.14\% & 71.28\% & 71.21\% \\
l=2 & 70.45\% & 70.12\% & 70.31\% & 70.21\% \\
\bottomrule
\end{tabular}
\label{tab:path-length-effect}
\end{table}
\subsection{Sensitivity Study}

We analyze two key factors in the \textsc{SG-RAG} pipeline: path length $l$ and LLM selection.
\textbf{Path length $l$} affects query decomposition for retrieval. To ensure edge coverage, $l$ should satisfy:
$l \in \left[ \left\lceil \frac{d_q}{2} \right\rceil, d_q \right], \quad$ $d_q$ is the diameter of query graph $q$.
In \textsc{ERQA} most queries have $d_q=2$, allowing $l=1$ and $l=2$. As shown in Table~\ref{tab:path-length-effect}, accuracy remains nearly identical in both settings, since \textsc{SG-RAG} reconstructs the full subgraph prompts after matching.
\textbf{LLM Selection.}
On \textsc{FB-ERQA}, we compare six generation models in identical settings, covering representative flagship LLMs for both English and Chinese. As shown in Table~\ref{tab:generation-models}, \textsc{SG-RAG} maintains consistent quality across different LLMs, with only marginal differences in both objective and subjective metrics.

\begin{table}[!t]
\centering
\small
\caption{Performance of Different Generation Models}
\label{tab:generation-models}
\begin{tabular}{lcccc}
\toprule
\textbf{Model} & \textbf{Recall} & \textbf{Hit@1} & \textbf{Precision} & \textbf{Emp.S} \\
\midrule
GPT-5.1                 & 80.19\% & 79.32\% & 80.11\% & 2.01 \\
GPT-4o-mini           & 79.86\% & 79.21\% & 79.64\% & 1.75 \\
GLM-4V                & 80.22\% & 80.03\% & 80.17\% & 1.82 \\
GLM-4-Flash           & 78.94\% & 78.32\% & 78.64\% & 1.73 \\
Gemini-2.5            & 80.10\% & 79.56\% & 80.04\% & 1.88 \\
Qwen-3                & 79.95\% & 79.12\% & 79.80\% & 1.81 \\
\bottomrule
\end{tabular}
\end{table}

\section{Conclusion}
We introduce a novel research problem \textsc{ERP} in RAG for factual queries. To solve it, we propose \textsc{SG-RAG}, a structure guided RAG method based on subgraph matching. We construct a large-scale benchmark \textsc{ERQA} for systematic \textsc{ERP} evaluation across domains. Experiments demonstrate that \textsc{SG-RAG} consistently outperforms strong baselines by large margins across metrics.

\section*{Limitations}
\textsc{SG-RAG} is a multi-stage pipeline errors in query graph extraction, entity normalization, or label completion may propagate to downstream retrieval and generation. In particular, parts of the pipeline rely on LLM-based information extraction during knowledge graph construction, which can introduce upstream inaccuracies that may affect subsequent indexing and retrieval. In addition, our evaluation focuses on English and Chinese LLMs. We do not study other languages, and the observed trends may not directly transfer to those settings. Finally, while \textsc{SG-RAG} is designed as a general framework, we do not explore optimizations for specific domains. Tailoring components to particular vertical domains may yield further gains beyond what is reported here.
\section*{Acknowledgements}
This work was supported by the China Agricultural University ``Young Researcher'' Start-up Fund No.~QNYJY2024144 and the Visiting Scholar Program of the China Scholarship Council (CSC) No.~202506350123.

\bibliography{custom}
\clearpage
\appendix

\section{Appendix: Symbol Definitions}
\label{appendix:Symbol Definitions}

\begin{table}[!h]
\centering
{\small  % 开始9pt字体环境
\renewcommand{\arraystretch}{1.1}
\begin{tabularx}{\linewidth}{>{\raggedright\arraybackslash}p{2.1cm} p{5.9cm}}
\hline
\textbf{Symbol} & \textbf{Description} \\
\hline
$K$ & external knowledge documents \\
$G$ & knowledge graph (data graph) \\
$q^{o}$ & a user natural language query \\
$q^*$ & initial query graph \\
$q$ &  the normalized query graph \\
$a$ & the final answer generated by SG-RAG \\
$g$ & exact matched subgraphs \\
$g''$ & fallback matched subgraphs  \\
$S$ & Sets of matched subgraphs $g$\\
$d_G$ & the diameter of the $G$ \\
$d_q$ & the diameter of the $q$ \\
$I_l$ & an R*-Tree index over paths of length $l$ \\
$p_z$ ($p_q$) & path in $G$ (or $q$) \\
$p_c$ ($p'_c$) & path in candidate set \\
$v_i$ ($q_i$) & a node in $G$ (or $q$) \\
$e_{ij}$ ($e_{q_i q_j}$) & an edge in $G$ (or $q$) \\
$\mathcal{M}$ & a GNN model of $G$ \\
$Q$ & query paths before label completion \\
 $Q'$ & query paths after label completion \\
$U$ & a mapping from unknown nodes to label  \\
$P$ & fully labeled query path sets \\
$p_q.{cand\_list}$ & exact candidate paths\\
$g_{v_i}$ ($s_{v_i}$) & a star subgraph(substructure) centered at $v_i$ \\
$o(v_i)$ & a star subgraph embedding \\
$o_0(v_i)$ ($o_0(q_i)$) & a node label embedding \\
$o(p_z)$ ($o(p_q)$) & a path dominance embedding \\
$o_0(p_z)$ ($o_0(p_q)$) & a path label embedding \\
\hline
\end{tabularx}
}
\caption{Symbols and Description.}
\label{tab:symbols}
\end{table}

\section{Appendix: Graph Construction and Label Embedding Implementation Details}
\label{appendix:Graph Construction}

\paragraph{Entity and Relation Extraction Format.}
We employ a large language model (GPT-4o-mini) to extract graph components from input text, following a standardized prompt template (see Table~\ref{tab:prompt-template}). Each extracted node is formatted as a triplet: \texttt{$(\mathrm{id}(v_i), \ell(v_i), \xi(v_i))$}, and each edge as a quadruple: \texttt{$(\mathrm{id}(e_{ij}), v_i, v_j, \xi(e_{ij}))$}. This structured output ensures compatibility with downstream graph-based reasoning and retrieval modules.

\begin{table}[!h]
\centering
\renewcommand{\arraystretch}{1.3}
\small
\begin{tabular}{p{1.2cm} p{6cm}} 
\toprule
\textbf{Section} & \textbf{Content} \\
\midrule
\textbf{Goal} &
Given a paragraph, identify all semantically meaningful entities and extract structured relations among them. Each entity becomes a node and each relation becomes an edge in the output graph. \\
\addlinespace[0.6em]
\textbf{Step1: Nodes} &
Identify all entities. For each entity, extract: \par
• \texttt{$\mathrm{id}(v_i)$}: Unique node identifier \par
• \texttt{$\ell(v_i)$}: Node label (entity name) \par
• \texttt{$\xi(v_i)$}: Node description (its role or attributes) \par
\textbf{Format:} \texttt{(<$\mathrm{id}(v_i)$><|><$\ell(v_i)$><|><$\xi(v_i)$>)} \\
\addlinespace[0.6em]
\textbf{Step2: Edges} &
For each related entity pair, extract: \par
• \texttt{$\mathrm{id}(e_{ij})$}: Unique edge identifier \par
• \texttt{$v_i$}, \texttt{$v_j$}: Start and end node IDs \par
• \texttt{$\xi(e_{ij})$}: Description of the relation \par
\textbf{Format:} \texttt{(<$\mathrm{id}(e_{ij})$><|><$v_i$><|><$v_j$><|><$\xi(e_{ij})$>)} \\
\addlinespace[0.6em]
\textbf{Step3: Output Format} &
List all nodes and edges using \texttt{\#} as a separator. End with \texttt{<|COMPLETE|>}. \\
\addlinespace[0.6em]
\textbf{Input Variable} &
\texttt{Text: \{input\_text\}} \\
\bottomrule
\end{tabular}
\caption{Prompt Template for Text-to-Graph Extraction}
\label{tab:prompt-template}
\end{table}

\paragraph{Label Embedding Computation.}
We use \texttt{text-embedding-3-small} to obtain semantic embeddings of node labels. These embeddings are stored for downstream path-level embedding composition and retrieval matching.

\paragraph{Path Embedding Concatenation.}
Given a path $p_z = [v_1, v_2, \dots, v_k]$, its label embedding is defined as the concatenation of node-level embeddings:
\begin{equation}
\small
    o_0(p_z) = [o_0(v_1), o_0(v_2), \dots, o_0(v_k)] \in \mathbb{R}^{k \cdot F},
\end{equation}
preserving order-sensitive semantic representation for high-fidelity matching.

\section{Appendix: GNN-Based Dominant Embedding Training}
\label{appendix:gnn-dominant}
\begin{figure}[b]
\centering
\includegraphics[width=0.99\linewidth]{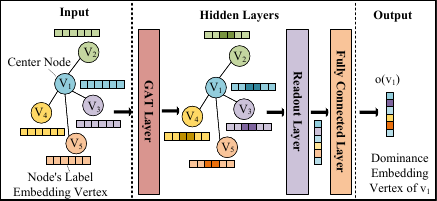}
\caption{Architecture of the GNN used for learning dominant embeddings.}
\label{fig3}
\end{figure}

\paragraph{Architecture.}
This appendix describes the training process of the GNN-based dominant embedding module used in \textsc{SG-RAG}, including GAT layer formulation, subgraph aggregation, and structural dominance constraints.
As shown in Fig~\ref{fig3}, the GNN adopts a standard GAT-based architecture with an input projection layer, an attention-based message passing layer, a readout layer, and a fully connected projection head.

\paragraph{GAT Layer Formulation.}
Given input label embeddings $x_j$ for node $v_j$, the attention coefficient between $v_i$ and $v_j$ is computed as:
\begin{equation}
\small
    a_{v_i v_j} = a(W x_i, W x_j)
\end{equation}
where $a(\cdot, \cdot)$ is a shared attention function $a: \mathbb{R}^{F'} \times \mathbb{R}^{F'} \rightarrow \mathbb{R}$, capturing the importance of $v_j$ to $v_i$. Let $N(v_i)$ denote the neighbor set of $v_i$. The attention coefficients are normalized using softmax:
\begin{equation}
\small
    \alpha_{v_i v_j} = \mathrm{softmax}(a_{v_i v_j}) = 
\frac{\exp(a_{v_i v_j})}{\sum_{v_k \in N(v_i)} \exp(a_{v_i v_k})}
\end{equation}
Then the updated node representation is:
\begin{equation}
\small
    x'_i = \sigma\left( \sum_{v_j \in N(v_i)} \alpha_{v_i v_j} \cdot W x_j \right)
\end{equation}
where $\sigma(\cdot)$ is a nonlinear activation function, and $x'_i \in \mathbb{R}^{F'}$.

\paragraph{Star Subgraph Aggregation.}
The dominant embedding of a node is computed from its local star-shaped subgraph $g_{v_i}$ via aggregation:
\begin{equation}
\small
   y_i = \sum_{v_j \in V(g_{v_i})} x'_j 
\end{equation}
\begin{equation}
\small
    o(g_{v_i}) = \sigma(W y_i), \quad o(v_i) = o(g_{v_i})
\end{equation}
\begin{figure}[t]
\centering
\includegraphics[width=0.99\linewidth]{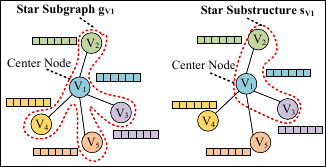}
\caption{Illustration of a star-shaped subgraph and its substructures.}
\label{fig4}
\end{figure}
As shown in Fig~\ref{fig4}, each star subgraph includes the central node and its neighbors, and all of its proper substructures are used for enforcing dominance.

\paragraph{Dominance Constraint.}
For any proper substructure $s_{v_i} \subset g_{v_i}$, the model enforces:
\begin{equation}
\small
    o(s_{v_i}) \preceq o(g_{v_i})
\end{equation}
\begin{equation}
\small
    \mathcal{L} = \sum \left\| \max(0, o(g_{v_i}) - o(s_{v_i})) \right\|_2^2
\end{equation}

This constraint ensures that smaller subgraphs embed into smaller vector regions, enabling efficient pruning.

\paragraph{Path-Level Embedding.}
Given a path $p_z = [v_1, \dots, v_k]$, its embedding is defined as:
\begin{equation}
\small
    o(p_z) = [o(v_1), \dots, o(v_k)] \in \mathbb{R}^{k \cdot d}
\end{equation}
\begin{algorithm}[!h]
\small
\caption{GNN Model Training}
\label{alg:gnn-training}
\begin{algorithmic}[1]
\REQUIRE training data $D_{train}$, learning rate $\eta$
\ENSURE trained GNN model $M$
\STATE $D_{train} \leftarrow \emptyset$
\FOR{each vertex $v_i \in V(G)$}
    \STATE extract star subgraph $g_{v_i}$ and substructures $s_{v_i}$
    \STATE add all $(g_{v_i}, s_{v_i})$ to $D_{train}$
\ENDFOR
\STATE shuffle $D_{train}$
\REPEAT
    \FOR{each batch $B \subseteq D_{train}$}
        \STATE compute embeddings and loss $\mathcal{L}(B)$
        \STATE update model: $M.\text{update}(\mathcal{L}(B), \eta)$
    \ENDFOR
    \STATE $\mathcal{L}_e \leftarrow 0$
    \FOR{each batch $B \subseteq D_{train}$}
        \STATE $\mathcal{L}_e \leftarrow \mathcal{L}_e + \mathcal{L}(B)$
    \ENDFOR
\UNTIL{$\mathcal{L}_e = 0$}
\RETURN trained model $\mathcal{M}$
\end{algorithmic}
\end{algorithm}
\begin{figure}[!b]
\centering
\includegraphics[width=0.95\linewidth]{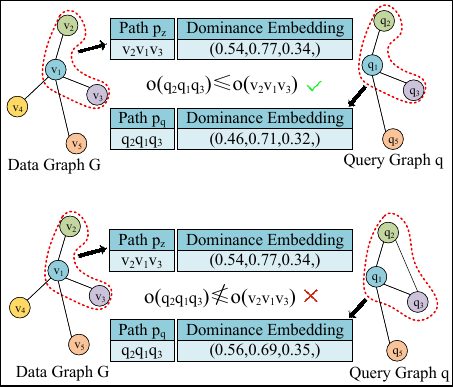}
\caption{Path-level dominant embedding matching via element-wise comparison.}
\label{fig5}
\end{figure}

\paragraph{Example Illustration.}
As shown in Fig~\ref{fig5}, \textsc{SG-RAG} verifies whether a query path $q_2 q_1 q_3$ is contained in a candidate path $v_2 v_1 v_3$ by checking:
\begin{equation}
\small
    o(q_2 q_1 q_3) \preceq o(v_2 v_1 v_3) \quad \Rightarrow \text{Valid Match}
\end{equation}
If the condition fails in any dimension, the match is rejected. This dominance check underpins the path-level pruning in the retrieval phase \textsc{SG-RAG}.

\paragraph{Training Algorithm.}
The full training process is shown in Algorithm~\ref{alg:gnn-training}. It uses a star-subgraph contrastive loss over multiple pairs to learn structure-aware embeddings that generalize to unseen queries.
\begin{table}[!h]
\centering
\renewcommand{\arraystretch}{1.3}
\small
\begin{tabular}{p{1.2cm} p{5.5cm}} % adjust for single-column width
\toprule
\textbf{Section} & \textbf{Content} \\
\midrule
\textbf{Goal} &
Given a user question, identify all semantically independent entities and extract their implicit structural relations to construct a query graph. If the question contains a target entity (i.e., the answer is unknown), it should be extracted as a node with label ``UNK''. \\
\addlinespace[0.6em]
\textbf{Step1: Nodes} &
Identify all entities. For each entity, extract: \par
• \texttt{$\mathrm{id}(q_i)$}: Unique node identifier \par
• \texttt{$\ell(q_i)$}: Node label (entity name), use ``UNK'' if it is the target to be queried \par
• \texttt{$\xi(q_i)$}: Node description (optional, can be empty if unknown) \par
\textbf{Format:} \texttt{(<$\mathrm{id}(q_i)$><|><$\ell(q_i)$><|><$\xi(q_i)$>)} \\
\addlinespace[0.6em]
\textbf{Step2: Edges} &
For each related entity pair, extract: \par
• \texttt{$\mathrm{id}(e_{q_iq_j})$}: Unique edge identifier \par
• \texttt{$q_i$}, \texttt{$q_j$}: Start and end node IDs \par
\textbf{Format:} \texttt{(<$\mathrm{id}(e_{q_iq_j})$><|><$q_i$><|><$q_j$><|>)} \\
\addlinespace[0.6em]
\textbf{Step3: Output Format} &
List all nodes and edges in order, separated by \texttt{\#}. End with \texttt{<|COMPLETE|>}. \\
\addlinespace[0.6em]
\textbf{Input Variable} &
\texttt{Text: \{user\_query\}} \\
\bottomrule
\end{tabular}
\caption{Prompt Template for Query-to-Graph Extraction}
\label{tab:query-prompt-template}
\end{table}

\section{Appendix: Entity Normalization Details}
\label{appendix:entity-normalization}

\paragraph{Entity and Relation Extraction.} The initial query graph $q^*$ is constructed by prompting a large language model to identify entities and their relationships from the natural language query $q^{o}$. The prompt format and examples are listed in Table~\ref{tab:query-prompt-template}.
\paragraph{Label Embedding.} For each node $q_i^* \in V(q^*)$, the system computes its label embedding:
\begin{equation}
\small
    o_0(q_i^*) = \text{LLM}(\ell(q_i^*)) \in \mathbb{R}^F.
\end{equation}
These embeddings serve as semantic keys for aligning query nodes with their canonical counterparts in $G$.

\paragraph{Entity Normalization via FAISS.} Due to possible lexical variations in user queries, \textsc{SG-RAG} employs a FAISS-based nearest neighbor search over the label embedding space of $G$:

\begin{itemize}[leftmargin=*, itemsep=0pt, topsep=1pt, parsep=0pt, partopsep=0pt]
    \item A deep copy of the initial graph $q^*$ is made: $q \leftarrow \text{DeepCopy}(q^*)$;
    \item Each $o_0(q_i^*)$ is used as a query vector;
    \item FAISS retrieves the most similar $q_i \in V(G)$ based on cosine similarity;
    \item The node $q_i^*$ is replaced with $q_i$ in $q$.
\end{itemize}

\paragraph{Formal Procedure.}
The normalization process is summarized in Algorithm~\ref{alg:entity-normalization}.
\begin{algorithm}[htbp]
\small
\caption{Entity Normalization}
\label{alg:entity-normalization}
\begin{algorithmic}[1]
\REQUIRE Initial query graph $q^*$ with raw entities
\ENSURE Normalized query graph $q$

\STATE $q \leftarrow \text{DeepCopy}(q^*)$
\FOR{each vertex $q_i^* \in V(q^*)$}
    \STATE $o_0(q_i^*) \leftarrow \text{Embed}(\ell(q_i^*))$
    \STATE $q_i \leftarrow \text{FAISS\_Search}(o_0(q_i^*), \text{label embedding of } V(G))$
    \STATE Replace $q_i^*$ with $q_i$ in $V(q)$
\ENDFOR
\RETURN $q$
\end{algorithmic}
\end{algorithm}

\section{Appendix: Path Decomposition and Completion}
\label{appendix:path-completion}

\paragraph{Cost-Aware Path Decomposition.}
Given a query graph $q$ and predefined path length $l$, we initialize $Q = \emptyset$ and $\text{Cost}_Q(\phi) = +\infty$. Starting from the node with highest degree, we enumerate all initial paths of length $l$ as $\text{PathSet}$, and iteratively construct the path set $\text{local}_Q$ using:
\begin{itemize}[leftmargin=*, itemsep=0pt, topsep=1pt, parsep=0pt, partopsep=0pt]
    \item minimal edge overlap with existing $\text{local}_Q$;
    \item minimal path weight $w(p) = -\sum_{q_i \in p} \deg(q_i)$.
\end{itemize}

The optimal set $Q$ minimizing $\text{Cost}_Q(\phi) = \sum w(p_q)$ is retained.

\begin{algorithm}[htbp]
\small
\caption{Cost-Model-Based Query Plan Selection}
\label{alg:cost-aware-decomposition}
\begin{algorithmic}[1]
\REQUIRE Query graph $q$, path length $l$
\ENSURE Query path set $Q$ representing the query plan $\phi$
\STATE $Q \leftarrow \emptyset$; \quad $\text{Cost}_Q(\phi) \leftarrow +\infty$
\STATE Select starting vertex $q_i$ with the highest degree
\STATE Obtain initial path set $\text{PathSet}$ of length $l$ starting from $q_i$
\FOR{each candidate initial path $p_q \in \text{PathSet}$}
    \STATE $\text{local}_Q \leftarrow \{p_q\}$; \quad $\text{local\_cost} \leftarrow 0$
    \WHILE{some edge in $E(q)$ is not covered by $\text{local}_Q$}
        \STATE Select path $p$ of length $l$ that connects with $\text{local}_Q$, minimizing:
        \begin{itemize}[leftmargin=*, itemsep=0pt, topsep=1pt, parsep=0pt, partopsep=0pt]
            \item Edge overlap with $\text{local}_Q$
            \item Path weight $w(p)$
        \end{itemize}
        \STATE $\text{local}_Q \leftarrow \text{local}_Q \cup \{p\}$
        \STATE $\text{local\_cost} \leftarrow \text{local\_cost} + w(p)$
    \ENDWHILE
    \IF{$\text{local\_cost} < \text{Cost}_Q(\phi)$}
        \STATE $Q \leftarrow \text{local}_Q$
        \STATE $\text{Cost}_Q(\phi) \leftarrow \text{local\_cost}$
    \ENDIF
\ENDFOR
\RETURN $Q$
\end{algorithmic}
\end{algorithm}

\paragraph{Label Completion for Unknown Nodes.}
For each query path $p_q$ with unknown nodes, we insert zero vectors to obtain wildcard label embeddings $o_0(p_q)$. We then traverse the R*-Tree index $I_l$ using a max-heap, comparing known dimensions of $o_0(p_q)$ with MBRs to collect aligned candidate paths $p_z$, and extract candidate label values into a mapping $U$ from node IDs to label sets.

\begin{algorithm}[!h]
\small
\caption{Find Candidate Labels for Unknown Vertices in Query Graph}
\label{alg:label-completion}
\begin{algorithmic}[1]
\REQUIRE Query path set $Q$; R*-Tree index $I_l$ over data graph $G$
\ENSURE $U$: a map from unknown vertex ID to candidate labels
\STATE $U \leftarrow \emptyset$
\STATE $\text{UnknownVertexPathset} \leftarrow \emptyset$  // Paths containing unknown nodes
\WHILE{at least one unknown vertex is not covered}
    \STATE Select a path $p_q \in Q$ that contains an uncovered unknown vertex
    \STATE $\text{UnknownVertexPathset} \leftarrow \text{UnknownVertexPathset} \cup \{p_q\}$
\ENDWHILE

\FOR{each query path $p_q \in \text{UnknownVertexPathset}$}
    \STATE Compute $o_0(p_q)$ using LLM \hfill // All-zero embedding for unknown nodes
    \STATE $\text{root}(I_l).\text{list} \leftarrow \text{UnknownVertexPathset}$
    \STATE Insert $(\text{root}(I_l), 0)$ into heap $H$
\ENDFOR

\WHILE{$H$ is not empty}
    \STATE $(N, \text{key}(N)) \leftarrow H.\text{pop()}$
    \IF{$\text{key}(N) < \min_{p_q \in \text{UnknownVertexPathset}} \|o_0(p_q)\|_1$}
        \STATE \textbf{break}
    \ENDIF
    \IF{$N$ is not a leaf node}
        \FOR{each child $N_i \in N$}
            \FOR{each query path $p_q \in N.\text{list}$}
                \IF{$o_0(p_q)$ matches $N_i.\text{MBR}_0$ on known positions}
                    \STATE $N_i.\text{list} \leftarrow N_i.\text{list} \cup \{p_q\}$
                \ENDIF
            \ENDFOR
            \IF{$N_i.\text{list} \neq \emptyset$}
                \STATE Insert $(N_i, \text{key}(N_i))$ into heap $H$
            \ENDIF
        \ENDFOR
    \ELSE
        \FOR{each stored path $p_z \in N$}
            \FOR{each query path $p_q \in N.\text{list}$}
                \IF{$o_0(p_q)$ and $o_0(p_z)$ match on known positions}
                    \STATE Extract labels from $p_z$ for unknown positions in $p_q$
                    \STATE Update $U[\text{unknown vertex id}]$ accordingly
                \ENDIF
            \ENDFOR
        \ENDFOR
    \ENDIF
\ENDWHILE

\RETURN $U$
\end{algorithmic}
\end{algorithm}

\paragraph{Enumerating Completed Paths.} Given:
\[
U = \{ q_1: [\ell_1, \ell_2],\; q_2: [\ell_3] \},
\]
we enumerate the Cartesian product of label options, e.g., $(\ell_1, \ell_3), (\ell_2, \ell_3)$, and apply each combination to a deep copy of $Q$ to obtain a label-complete set $Q'$. Repeating this for all combinations gives:
\[
P = \{ Q_1', Q_2', \ldots, Q_n'\}.
\]

This ensures that all query paths are structurally complete and semantically instantiable for path-level matching.
Algorithms~\ref{alg:cost-aware-decomposition}; Algorithms~\ref{alg:label-completion} and Algorithms~\ref{alg:enumeration} formally describe the decomposition and completion process.
\begin{algorithm}[!h]
\small
\caption{Populate the Unknown Vertices in Query Paths}
\label{alg:enumeration}
\begin{algorithmic}[1]
\REQUIRE Query path set $Q$; candidate label map $U$
\ENSURE $P \leftarrow \{Q' \mid Q'$ is a label-completed instantiation of $Q\}$

\STATE $\text{UnknownVertexIDs} \leftarrow$ keys of $U$
\STATE $\text{LabelOptions} \leftarrow [U[\text{id}] \text{ for id in UnknownVertexIDs}]$
\STATE $\text{LabelCombinations $\leftarrow$ CartesianProduct(LabelOptions)}$
\STATE $P \leftarrow \emptyset$

\FOR{each $\text{combo} \in \text{LabelCombinations}$}
    \STATE $\text{LabelMap} \leftarrow \emptyset$
    \FOR{$i = 0$ to $\text{length(UnknownVertexIDs)} - 1$}
        \STATE $\text{LabelMap}[\text{UnknownVertexIDs}[i]] \leftarrow \text{combo}[i]$
    \ENDFOR

    \STATE $Q' \leftarrow \text{DeepCopy}(Q)$
    \FOR{each path in $Q'$}
        \FOR{each vertex $v$ in path}
            \IF{$v.\text{label} = \text{NULL}$ \AND $v.\text{id} \in \text{LabelMap}$}
                \STATE $v.\text{label} \leftarrow \text{LabelMap}[v.\text{id}]$
            \ENDIF
        \ENDFOR
    \ENDFOR

    \STATE Append $Q'$ to $P$
\ENDFOR

\RETURN $P$
\end{algorithmic}
\end{algorithm}

\section{Appendix: Complexity Analysis}
\label{appendix:CA}

In this appendix, we formalize the time complexity of the SG-RAG retrieval phase. 
The retrieval phase consists of three components: entity normalization, unknown label completion, and exact path-level matching.

\paragraph{Notation.}
Let $|V(q)|$ denote the number of nodes in the query graph, $N$ the number of entities in the knowledge graph, $Q$ the set of query paths after decomposition, and $Q_u \subseteq Q$ the subset of query paths containing unknown nodes. 
Let $u$ be the number of unknown nodes, and let $t_j$ denote the number of candidate labels for the $j$-th unknown node. 
Let $h$ be the height of the R*-Tree, $f$ the average fan-out at each level, and $PP_i$ the pruning ratio at level $i$.

\begin{theorem}

The time complexity of the SG-RAG retrieval phase is
\begin{equation}
\label{eq:appendix_total_complexity}
\small
\begin{aligned}
O\Bigl(&|V(q)| \cdot \log N \\
&+ \sum_{i=0}^{h} |Q_u| \cdot f^{h-i+1} \cdot (1 - PP_i) \\
&+ \left( \prod_{j=1}^{u} t_j \right)
\cdot
\sum_{i=0}^{h} |Q| \cdot f^{h-i+1} \cdot (1 - PP_i)
\Bigr).
\end{aligned}
\end{equation}
    
\end{theorem}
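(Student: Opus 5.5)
The plan is to bound the three stages of the retrieval phase separately and then add them. The stages are entity normalization (Algorithm~\ref{alg:entity-normalization}), unknown-label completion (Algorithm~\ref{alg:label-completion} followed by Algorithm~\ref{alg:enumeration}), and exact path-level matching over each completed path set $Q'\in P$. The three summands of \eqref{eq:appendix_total_complexity} correspond to these stages in order. Query-graph extraction and embedding calls are treated as $O(1)$ oracle calls per node, and the assembly step $\delta$ is not counted, since the statement restricts to retrieval.

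\textbf{Step 1 (normalization).} For each of the $|V(q)|$ query nodes, Algorithm~\ref{alg:entity-normalization} performs one nearest-neighbour search over the $N$ label embeddings of $V(G)$. I will assume the FAISS index is a tree- or graph-based structure (for example, an HNSW/IVF search with fixed dimension $F$), so that one search costs $O(\log N)$. This gives $O(|V(q)|\log N)$. The assumption is a modelling one and should be stated explicitly.

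\textbf{Step 2 (label completion).} I will analyse the best-first traversal of $I_l$ level by level. The root sits at level $0$ and the leaves at level $h$. The key quantity is the number of (index node, query path) pairs examined at level $i$. Without pruning, a query path reaching level $i$ is tested against children whose total count, together with the stored entries below them, is bounded by $f^{h-i+1}$. Defining $PP_i$ as the fraction of these tests eliminated by the MBR check at level $i$ leaves a surviving fraction of $(1-PP_i)$. Only the paths in $Q_u$ are propagated (UnknownVertexPathset $\subseteq Q_u$). Each MBR or leaf comparison costs $O(1)$, since the dimension $lF$ is fixed. Summing over $i=0,\dots,h$ yields the second term. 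Heap operations add only a logarithmic factor in the heap size, which I will absorb by bounding the heap size by the number of surviving nodes and treating $f$ and $h$ as fixed parameters. This absorption is the step I expect to be the main obstacle. I will argue it either by using a bucketed priority queue with $O(1)$ operations or by noting that the heap size is at most $f h$ per query path, so the overhead is a constant.

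\textbf{Step 3 (exact matching).} Algorithm~\ref{alg:enumeration} produces $|P|=\prod_{j=1}^u t_j$ completed path sets, because it enumerates the Cartesian product of the candidate label sets $U[\cdot]$. Each $Q'$ contains exactly $|Q|$ paths, and each path undergoes the same best-first traversal as in Step 2, now with the joint semantic-MBR test and dominance-region test. Both tests are $O(1)$ per comparison and are captured by the same pruning ratios $PP_i$. Repeating the Step 2 level-wise count with $|Q|$ in place of $|Q_u|$ gives $\sum_i |Q| f^{h-i+1}(1-PP_i)$ per instantiation. Multiplying by $|P|$ gives the third term. Computing the GNN embeddings $o(p_q)$ costs $O(|Q|)$ per instantiation, which is dominated by this sum.

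Adding the three bounds gives \eqref{eq:appendix_total_complexity}. Specialising to $u=1$ and absorbing the Step 2 term (since $|Q_u|\le |Q|$) recovers the simplified bound quoted in the main text.
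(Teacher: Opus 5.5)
Your proposal is correct at the paper's level of rigor and follows its proof essentially step for step. It uses the same three-stage split: FAISS normalization at $O(|V(q)|\log N)$, under the search-cost assumption you rightly make explicit; a level-wise pruned R*-tree traversal for the $|Q_u|$ paths containing unknown nodes; and the same traversal for all $|Q|$ paths in each of the $\prod_{j=1}^{u} t_j$ completed path sets, with the three bounds added at the end. Two of your extra justifications, which the paper does not attempt, are slightly off but inessential: a best-first frontier is not bounded by $fh$ entries (that bound holds for depth-first search), and since the heap order only serves the early break, a plain worklist with a per-node key test expands the same nodes at $O(1)$ per operation; also, with the root at level $0$ the unpruned number of tests at level $i$ is $f^{i+1}$, not a per-node subtree size, so the exponent $h-i+1$ reflects counting levels up from the leaves and is recovered by reindexing the per-level ratios $PP_i$.
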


\begin{proof}

We analyze the retrieval phase in three parts.
\paragraph{Entity Normalization.}
Each query node is aligned to the closest entity in the knowledge graph using FAISS-based approximate nearest neighbor search over $N$ entities. Since this search is performed once for each node in $V(q)$, the total cost of this stage is
\begin{equation}
\label{eq:appendix_entity_norm}
\small
O(|V(q)| \cdot \log N).
\end{equation}

\paragraph{Unknown Label Completion.}
Let $|Q_u|$ be the number of query paths containing unknown nodes. For each such path, SG-RAG traverses the R*-Tree to collect candidate labels. At level $i$, the traversal expands with average fan-out $f$, while only a fraction $(1-PP_i)$ of branches survives after pruning. Summing over all levels gives the total cost of unknown label completion:
\begin{equation}
\label{eq:appendix_unknown_completion}
\small
O\left(
\sum_{i=0}^{h} |Q_u| \cdot f^{h-i+1} \cdot (1 - PP_i)
\right).
\end{equation}

\paragraph{Exact Path-Level Matching.}
After candidate labels are collected, SG-RAG enumerates all completed query path sets. If the $j$-th unknown node has $t_j$ candidate labels, then the number of completed query path sets is
$
\prod_{j=1}^{u} t_j
$.
For each completed query path set, SG-RAG performs path-level retrieval over all query paths in $Q$ using the same R*-Tree traversal strategy. Therefore, the cost of this stage is
\begin{equation}
\label{eq:appendix_exact_matching}
\small
O\left(
\left( \prod_{j=1}^{u} t_j \right)
\cdot
\sum_{i=0}^{h} |Q| \cdot f^{h-i+1} \cdot (1 - PP_i)
\right).
\end{equation}

Combining Eqs.~\eqref{eq:appendix_entity_norm}, \eqref{eq:appendix_unknown_completion}, and \eqref{eq:appendix_exact_matching}, we obtain Eq.~\eqref{eq:appendix_total_complexity}.
\end{proof}

\begin{corollary}

Under the practical setting where most queries contain only one unknown node ($u = 1$) and the number of candidate labels is small ($t_1 \leq 10$), the retrieval complexity reduces to
\begin{equation}
\label{eq:appendix_practical_complexity}
\small
O\left(
|V(q)| \cdot \log N
+
(t_1 + 1)
\cdot
\sum_{i=0}^{h} |Q| \cdot f^{h-i+1} \cdot (1 - PP_i)
\right).
\end{equation}
\end{corollary}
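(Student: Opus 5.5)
The corollary follows by specializing the Theorem directly, without redoing the analysis. I would substitute $u=1$ into Eq.~\eqref{eq:appendix_total_complexity}. The product $\prod_{j=1}^{u} t_j$ then collapses to the single factor $t_1$. The first term $|V(q)|\cdot\log N$ is unchanged, and the third term becomes $t_1\cdot\sum_{i=0}^{h}|Q|\cdot f^{h-i+1}(1-PP_i)$.

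The remaining work is to absorb the label-completion term $\sum_{i=0}^{h}|Q_u|\cdot f^{h-i+1}(1-PP_i)$ into the matching term. For this I would use $Q_u\subseteq Q$, which is part of the Notation, so $|Q_u|\le|Q|$. Every summand is nonnegative, because $f\ge 1$ and $0\le PP_i\le 1$. Hence the completion term is bounded termwise by $1\cdot\sum_{i=0}^{h}|Q|\cdot f^{h-i+1}(1-PP_i)$. Adding this bound to the third term gives the coefficient $(t_1+1)$ in front of the common sum. That is exactly Eq.~\eqref{eq:appendix_practical_complexity}.

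One point needs care: the claimed bound is a valid upper bound for the original expression. Each term of the original is at most the corresponding term of the new expression, so the big-$O$ containment holds. I would also note that the two expressions are asymptotically equivalent, since the new bound is at most twice the original, using $t_1\ge1$. The hypothesis $t_1\le 10$ is not needed for the equality of the bounds. It only serves the subsequent remark that $(t_1+1)$ is a small constant, giving near-linear behaviour in $|Q|$.

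The main obstacle is only interpretive. One must justify the nonnegativity of the per-level terms, i.e.\ that the pruning ratios satisfy $PP_i\in[0,1]$, and that $t_1\ge 1$ whenever matching proceeds. I would state both as standing assumptions from the definitions of the pruning ratio and of candidate labels. If $t_1=0$, matching is skipped, $S=\emptyset$ triggers the fallback, and the bound still holds trivially.
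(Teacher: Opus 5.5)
Your proposal is correct and follows the paper's proof: set $u=1$ so that $\prod_{j=1}^{u} t_j = t_1$, bound the label-completion term using $Q_u \subseteq Q$, and merge the two sums into the coefficient $(t_1+1)$. Your extra remarks are also correct refinements that the paper leaves implicit: nonnegativity of the summands, the factor-two equivalence when $t_1 \ge 1$, and the observation that the hypothesis $t_1 \le 10$ is never used in the derivation.
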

\begin{proof}

When $u=1$, we have
$
\prod_{j=1}^{u} t_j = t_1
$.
Moreover, since $Q_u \subseteq Q$, the unknown label completion term can be upper bounded by
\begin{equation}
\small
\sum_{i=0}^{h} |Q_u| \cdot f^{h-i+1} \cdot (1 - PP_i)
\leq
\sum_{i=0}^{h} |Q| \cdot f^{h-i+1} \cdot (1 - PP_i).
\end{equation}
Substituting this bound into Eq.~\eqref{eq:appendix_total_complexity} yields
\begin{equation}
\small
O\left(
|V(q)| \cdot \log N
+
(1 + t_1)
\cdot
\sum_{i=0}^{h} |Q| \cdot f^{h-i+1} \cdot (1 - PP_i)
\right),
\end{equation}
which proves Eq.~\eqref{eq:appendix_practical_complexity}.
\end{proof}

The above result shows that, in the common case where the number of unknown nodes is small and candidate label sets remain limited, the retrieval cost of SG-RAG remains tractable in practice.

\section{Appendix: ERQA Dataset Construction and Statistics}
\label{appendix:erqa}

\begin{table}[h]
\centering
\small
\begin{tabular}{lccccc}
\toprule
\textbf{Subset} & \textbf{\#Ent.} & \textbf{\#Rel.} & \textbf{AvgE} & \textbf{AvgR} & \textbf{\#$d$} \\
\midrule
CM-ERQA       & 1.66M   & 4.58M   & 5.4 & 5.7 & 3.6 \\
FB-ERQA      & 14.5K   & 272.1K  & 6.1 & 5.4 & 3.1 \\
UD-ERQA   & 314.4K  & 53.0K   & 4.7 & 5.1 & 3.3 \\
\bottomrule
\end{tabular}
\caption{Structural Statistics of the \textsc{ERQA }Datasets}
\label{tab:sgqa-stats}
\end{table}
\paragraph{Dataset Statistics.}
As shown in Table~\ref{tab:sgqa-stats}, ``\#Ent.'' and ``\#Rel.'' indicate the number of entities and relations, ``AvgE'' and ``AvgR'' are the average number of entities and relations per query graph, and ``\#$d$'' is the average maximum path length. These statistics guide the selection of hyperparameters \textsc{SG-RAG} such as the path length $l$.

\begin{figure}[!t]
\centering
\includegraphics[width=0.98\columnwidth]{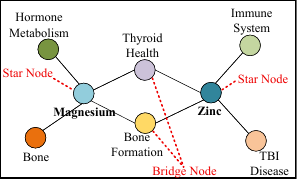}
\caption{Illustration of a bridge-star subgraph structure.}
\label{fig6}
\end{figure}

\paragraph{Bridge-Star Subgraph Construction.}
Each QA pair is constructed over a \textbf{Bridge-Star Subgraph}, defined as a pair of high-degree star nodes connected via shared bridge nodes (Figure~\ref{fig6}). Each star node forms a 1-hop subgraph, while bridge nodes enable multi-hop constraint reasoning.

\paragraph{Question Generation Process.}
For each subgraph: One star node is hidden and designated as the ground-truth answer; The visible star node's 1-hop neighbors and all bridge nodes are included in the question context.
The structured input is then converted into a natural language question via LLM prompting. The prompt format in Appendix~\ref{appendix:question-generation-prompt}.

\begin{figure}[!b]
\centering
\includegraphics[width=1\columnwidth]{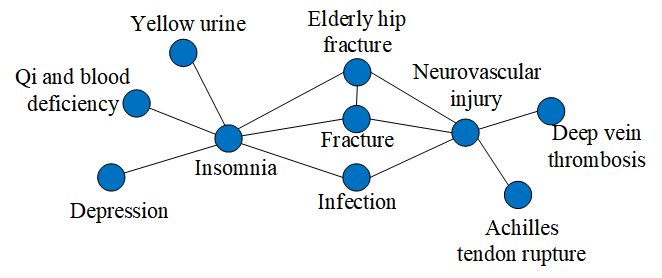} 
\caption{Subgraph structure used for query construction.}
\label{fig7}
\end{figure}

\begin{table}[!h]
\centering
\renewcommand{\arraystretch}{1.2}
\small
\begin{tabular}{p{1.8cm} p{1.6cm} p{3.4cm}}
\toprule
\textbf{Method} & \textbf{Answer} & \textbf{Key Supporting Output} \\
\midrule
GPT-5.1 & Unable determine & ``...could not identify a single disease associated with all listed complications...'' \\
NaiveRAG & Unable determine & ``...no disease found to be directly related to all the listed complications...'' \\
RAPTOR & Sleep disorder & ``Sleep disorder'' \\
GraphRAG & Breast cancer & ``...may experience thrombosis, infection, and fracture...'' \\
LightRAG & Unable determine & ``...no disease clearly causes all these complications simultaneously...'' \\
SubgraphRAG & Breast cancer & ``...may experience thrombosis, infection, and fracture...'' \\
HyperGraphRAG & Unable determine & ``...could not identify a single disease associated with all listed complications...'' \\
LinearRAG & Unable determine & ``...could not identify a single disease associated with all listed complications...'' \\
KAG & prolonged
inactivity & ``...The disease that is likely
prolonged inactivity....'' \\
\textbf{SG-RAG} & \textbf{Neurovascular injury} & ``Answer: Neurovascular injury. Reasoning: Based on the description, this condition is associated with all listed complications...'' \\
\bottomrule
\end{tabular}
\caption{Comparison of Answers}
\label{tab:case-comparison-single}
\end{table}
\paragraph{Example.}
In Figure~\ref{fig6}, “Magnesium” and “Zinc” are star nodes. “Magnesium” links to “Bone” and “Hormone Metabolism”; “Thyroid Health” and “Bone Formation” are bridge nodes. A generated question is: \emph{Which element is associated with bone formation, thyroid health, hormone metabolism, and immune system?}
Here, “Magnesium” is the correct answer, while “Zinc” is a plausible distractor.
This design supports precise multi-constraint evaluation of RAG systems.
\paragraph{Human Validation Protocol.}
 The evaluation focuses on the following three aspects.
\textbf{Fluency (1--5).} This score measures whether a query is natural, clear, and grammatically well-formed as a question. A score of $1$ indicates that the query is very unnatural or difficult to understand, while a score of 5 indicates that the query is fluent and easy to read.
\textbf{Answerability.} This metric measures whether the query can be answered based on the available knowledge graph evidence. Annotators assign ``Yes'' if the query contains enough information to support a unique factual answer, and ``No'' otherwise.
\textbf{Ambiguity.} This metric measures whether the query may reasonably correspond to more than one answer. Annotators assign ``Yes'' if multiple answers are plausible, and ``No'' if the query is sufficiently specific. We report the percentage of ambiguous queries, so lower values are better.
For cases that are difficult to judge, the annotators verify them through literature search before making a final decision. For Fluency, we report the average score across sampled queries. For Answerability and Ambiguity, we report the percentage of queries labeled as ``Yes''.

\begin{table}[!t]
\centering
\renewcommand{\arraystretch}{1.3}
\small
\begin{tabular}{p{1.4cm} p{5.5cm}}
\toprule
\textbf{Section} & \textbf{Content} \\
\midrule
\textbf{Goal} &
Given a structured subgraph and a user query, infer the most probable answer by reasoning over the entities and their relations. The answer must be one of the entities mentioned in the graph. \\
\addlinespace[0.6em]
\textbf{Step1: Input Subgraph} &
Each node contains: \par
• $id(v_i)$ \quad • $\ell(v_i)$ \quad • $\xi(v_i)$ \par
Each edge contains: \par
• $id(v_i)$ \quad •$id(v_j)$ \quad • $\ell(e_{ij})$ \\
\addlinespace[0.6em]
\textbf{Step2: Relation Paragraph} &
Convert each edge into a natural language sentence: \par
\texttt{``Node $v_i$ is related to Node $v_j$ via: [relation :$\ell(e_{ij})$].''} \par
Concatenate all such sentences into a coherent paragraph as background knowledge. \\
\addlinespace[0.6em]
\textbf{Step3: Prompt Composition} &
Combine the following parts into the final prompt: \par
• Fixed instruction explaining the task \par
• The relation paragraph from Step 2 \par
• The user’s question \par
• A constraint: the answer must be one of the mentioned entity labels. If undecidable, return unable to determine. \\
\addlinespace[0.6em]
\textbf{Formatted Example} &
\texttt{Below is a paragraph describing the relationships among entities in a structured graph.} \par
\texttt{This paragraph contains the answer to a user question. Read and reason carefully.} \par
\texttt{Note: The final answer must be one of the entity labels mentioned in the paragraph.} \par
\texttt{---} \par
\texttt{Known Relations:} \par
\texttt{Node1 is related to Node2 via: edge1.} \par
\texttt{Node1 is related to Node3 via: edge2.} \par
\texttt{Node1 is related to Node4 via: edge3.} \par
\texttt{---} \par
\texttt{User Question: Which entity is associated with multiple others?} \par
\texttt{Answer: [Entity] or ``Unable to determine''} \\
\addlinespace[0.6em]
\textbf{Input Variables} &
\texttt{Matched Subgraph: \{nodes, edges\}} \par
\texttt{User Question: \{query\}} \\
\bottomrule
\end{tabular}
\caption{Prompt Template for Answer Inference Using Matched Subgraph}
\label{tab:subgraph-answer-prompt}
\end{table}

\section{Appendix: Case Study}
\label{appendix:case study}
To further illustrate the effectiveness of \textsc{SG-RAG} in handling constraint-rich queries, we present a representative case from the \textsc{CM-ERQA} subset  :
\textbf{Query:} \textit{Which disease is likely to simultaneously cause deep vein thrombosis, acute closed Achilles tendon rupture, infection, and fracture as complications?}
\textbf{Gold Answer:} \textit{Neurovascular injury}
The entities included in this query and their neighbors are shown in Figure~\ref{fig7}. The comparison result is shown in Table~\ref{tab:case-comparison-single}

\begin{table}[!t]
\centering
\renewcommand{\arraystretch}{1.3}
\small
\begin{tabular}{p{1.4cm} p{5.7cm}}
\toprule
\textbf{Section} & \textbf{Content} \\
\midrule
\textbf{Goal} &
Generate a fluent and concise natural language question that starts with ``Which \{CORE\_TYPE\}...''. The question must simultaneously reference: \par
• Unique neighbors of the core node \texttt{\{UNIQUE\_DESC\}} \par
• Shared bridge neighbors \texttt{\{COMMON\_DESC\}} \par
This ensures that the answer must satisfy all structural constraints while avoiding ambiguity caused by bridge entities. \\
\addlinespace[0.6em]
\textbf{Step1: Placeholder Filling} &
• \texttt{\{CORE\_TYPE\}}: The type of the core entity (e.g., ``element'', ``disease'') \par
• \texttt{\{UNIQUE\_DESC\}}: Descriptions of neighbors exclusive to the core node \par
• \texttt{\{COMMON\_DESC\}}: Descriptions of shared bridge neighbors \\
\addlinespace[0.6em]
\textbf{Step2: Prompt to LLM} &
You are a Chinese language expert. Based on the placeholders provided, polish and generate a fluent, natural Chinese question that: \par
• Starts with ``Which\{CORE\_TYPE\}...'' \par
• Mentions both: \par
\texttt{\{UNIQUE\_DESC\}} (directly related features) \par
\texttt{\{COMMON\_DESC\}} (commonly co-occurring context) \par
• Do not reveal or explain the answer. Return only the question sentence. \\
\addlinespace[0.6em]
\textbf{Output Format} &
LLM should return a single sentence only, without any explanation or metadata. \\
\addlinespace[0.6em]
\textbf{Example Input} &
\texttt{\{CORE\_TYPE\} = Element} \par
\texttt{\{UNIQUE\_DESC\} = bone formation (promotes), hormone metabolism (related)} \par
\texttt{\{COMMON\_DESC\} = thyroid health (influences), immune system (associated)} \\
\addlinespace[0.6em]
\textbf{Example Output} &
\texttt{Which element is closely related to bone formation and hormone metabolism, and also influences thyroid health and participates in immune system regulation?} \\
\addlinespace[0.6em]
\textbf{Input Variables} &
\texttt{Bridge-Star Subgraph: \{nodes, edges\}} \\
\bottomrule
\end{tabular}
\caption{Prompt for Natural Question Generation}
\label{tab:question-prompt}
\end{table}

\begin{table}[!b]
\centering
\renewcommand{\arraystretch}{1.3}
\small
\begin{tabular}{p{1.3cm} p{5.9cm}}
\toprule
\textbf{Section} & \textbf{Content} \\
\midrule
\textbf{Goal} &
Automatically evaluate answers from multiple methods to the same question. The LLM acts as a reviewer and scores each answer independently according to the criteria below. \\
\addlinespace[0.6em]
\textbf{Step1:Input} &
• \texttt{Question: \{query\}} \par
• \texttt{Gold Answer: \{gold\_answer\}} \par
• \texttt{Answer List:} \par
\hspace{1em}[Method\_A] \{answer\_A\} \par
\hspace{1em}[Method\_B] \{answer\_B\} \par
\hspace{1em}\dots \par
\hspace{1em}[Method\_F] \{answer\_F\} \\
\addlinespace[0.6em]
\textbf{Step2: Evaluation Instruction} &
You are a senior evaluator. Please carefully read the question, the gold-standard answer, and the list of candidate answers. For each answer, assign scores based on the criteria below. Return the evaluation as a structured JSON. \\
\addlinespace[0.6em]
\textbf{Scoring Criteria} &
• \textbf{Logical Coherence (0–2)}: Is the reasoning clear, complete, and well-sequenced? \par
• \textbf{Insight (0–1)}: Does the answer offer new insight or helpful suggestions? \\
\addlinespace[0.6em]
\textbf{Output Format} &
Return the scores as follows: \par
\texttt{\{} \par
\hspace{1em}\texttt{"Method\_A": \{"logic": L1, "insight": I1, "total": T1\},} \par
\hspace{1em}\texttt{"Method\_B": \{"logic": L2, "insight": I2, "total": T2\},} \par
\hspace{1em}\texttt{\dots} \par
\hspace{1em}\texttt{"Method\_F": \{"logic": L3, "insight": I3, "total": T3\}} \par
\texttt{\}} \\
\bottomrule
\end{tabular}
\caption{Prompt Template for Subjective Evaluation}
\label{tab:llm-subjective-eval}
\end{table}

\paragraph{Analysis of Methods.}
Below we provide detailed observations for each method's performance:

\begin{itemize}[leftmargin=*, itemsep=0pt, topsep=1pt, parsep=0pt, partopsep=0pt]
  \item \textbf{GPT-5.1} generates generalized scenarios (e.g., trauma, diabetes) with no precise answer, failing to enforce multi-constraint reasoning.
  \item \textbf{NaiveRAG} retrieves texts related to individual entities, but lacks a mechanism to ensure global constraint satisfaction.
  \item \textbf{RAPTOR} retrieves some relevant candidates but is affected by noisy context, leading to inaccurate answer selection.
  \item \textbf{GraphRAG} partially matches constraints , but ignores ``Achilles tendon rupture,'' resulting in hallucination of ``breast cancer.''
  \item \textbf{LightRAG} covers all constraint terms but fails to reason over their intersection due to lack of co-occurrence modeling.
  \item \textbf{SubgraphRAG} capturing only part of the evidence subgraph and overemphasizing locally relevant complications, which again leads to the incorrect answer ``breast cancer.''
  \item \textbf{HyperGraphRAG} fails to identify a disease satisfying all listed complications and therefore returns an indeterminate answer.
  \item \textbf{LinearRAG} also fails to produce a specific answer, indicating that although it may retrieve semantically related information efficiently.
  \item \textbf{KAG} produces ``prolonged inactivity,'' which is not even a disease entity, indicating that it is distracted by a loosely related local clue rather than identifying the target condition .
  \item \textbf{SG-RAG} successfully reconstructs the full constraint structure, retrieves a matching subgraph, and generates a precise answer.
\end{itemize}

\section{Appendix: Prompt Template for Answer Inference from Matched Subgraph}
\label{appendix:subgraph-inference-prompt}

This prompt guides the model to infer an answer from a matched subgraph by reasoning over structured relations and answering a user’s natural language query. The model is instructed to restrict its final answer to one of the entities explicitly mentioned in the graph, the detail shown in Table~\ref{tab:subgraph-answer-prompt}.

\section{Appendix: Prompt Template for Question Construction}
\label{appendix:question-generation-prompt}

This prompt is used to generate a natural and concise question based on a bridge-star subgraph. The question must start with “Which \{CORE\_TYPE\}...” and mention both the **unique neighbors** of the core node and the **bridge neighbors**, ensuring the structural constraints are embedded and bridge nodes are disambiguated, the detail shown in Table\ref{tab:question-prompt}.

\section{Appendix: matching algorithm}
\label{appendix:exact subgraph matching algorithm}

\begin{algorithm}[!h]
\small
\caption{Exact Subgraph Matching with GNN-based Path Dominance Embedding}
\label{alg:path-matching}
\begin{algorithmic}[1]
\REQUIRE query graph $q$; fully labeled query path sets $P$; trained GNN model $M$; R*-Tree $I_l$ over data graph $G$.
\ENSURE exact match subgraph set $S$.

\STATE $S \leftarrow \emptyset$
\FOR{each query path set $Q_i' \in P$}
    \FOR{each query path $p_q \in Q_i'$}
        \STATE $p_q.\text{cand\_list} \leftarrow \emptyset$
        \STATE Obtain $o(p_q)$ via GNN
        \STATE Obtain $o_0(p_q)$ via LLM
    \ENDFOR

    \STATE $\text{root}(I_l).\text{list} \leftarrow Q_i'$
    \STATE Insert $(\text{root}(I_l), 0)$ into heap $H$

    \WHILE{$H$ is not empty}
        \STATE $(N, \text{key}(N)) \leftarrow H.\text{pop}()$
        \IF{$\text{key}(N) < \min_{p_q \in Q_i'} \| o(p_q) \|_1$}
            \STATE \textbf{break}
        \ENDIF

        \IF{$N$ is an internal node}
            \FOR{each child $N_i \in N$}
                \FOR{each $p_q \in N.\text{list}$}
                    \IF{$o_0(p_q) \in \text{MBR}_0(N_i)$ \AND $DR(o(p_q)) \cap \text{MBR}(N_i) \neq \emptyset$}
                        \STATE $N_i.\text{list} \leftarrow N_i.\text{list} \cup \{p_q\}$
                    \ENDIF
                \ENDFOR
                \IF{$N_i.\text{list} \neq \emptyset$}
                    \STATE Insert $(N_i, \text{key}(N_i))$ into $H$
                \ENDIF
            \ENDFOR
        \ELSE
            \FOR{each $p_z \in N$}
                \FOR{each $p_q \in N.\text{list}$}
                    \IF{$o_0(p_q) = o_0(p_z)$}
                        \IF{$o(p_q) \preceq o(p_z)$}
                            \STATE $p_q.\text{cand\_list} \leftarrow p_q.\text{cand\_list} \cup \{p_z\}$
                        \ENDIF
                    \ENDIF
                \ENDFOR
            \ENDFOR
        \ENDIF
    \ENDWHILE

    \STATE Assemble $S$ from all $p_q.\text{cand\_list}$ using Algorithm 7
\ENDFOR
\RETURN $S$
\end{algorithmic}
\end{algorithm}

\section{Appendix: Empowerment Score Definition and Evaluation}
\label{appendix:emp-score}
To assess the reasoning quality of generated answers, we adopt a subjective evaluation metric termed \textbf{Empowerment Score(Emp.S)}, designed to approximate human judgment. Each answer is scored by an LLM evaluator along two dimensions:

\begin{itemize}[leftmargin=*, itemsep=0pt, topsep=1pt, parsep=0pt, partopsep=0pt]
    \item \textbf{Logical Coherence ($0$--$2$ points)}: Does the response follow a clear and structured reasoning process?
    \item \textbf{Informational Value ($0$--$1$ point)}: Does the response provide useful insights, elaboration, or interpretative depth?
\end{itemize}

The total score ranges from $0$ to $3$.
All answers from \textsc{SG-RAG} and baseline systems are assessed under the same prompt setting by the same LLM judge to ensure fairness. We report average Empowerment Scores across pairwise comparisons. Evaluation prompt format and scoring rubric are included in Table~\ref{tab:llm-subjective-eval}.

\section{Appendix: assembly algorithm}
\label{appendix:assembly algorithm}
\begin{algorithm}[!h]
\small
\caption{Assemble Subgraphs}
\label{alg:subgraph-assembly}
\begin{algorithmic}[1]
\REQUIRE Query graph $q$;
 Fully labeled query path set $Q'$;\ Each $p_{q_i}$ is associated with $p_{q_i}.\text{cand\_list}$
\ENSURE Exact subgraph set ${S}$
\STATE ${S} \leftarrow \emptyset$
\STATE $F \leftarrow p_{q_1}.\text{cand\_list} \times p_{q_2}.\text{cand\_list} \times \cdots \times p_{q_k}.\text{cand\_list}$
\FOR{each combination $\{p_{f_1}, \dots, p_{f_k}\} \in F$}
    \STATE $g \leftarrow$ empty graph
    \STATE $\text{query\_node\_map} \leftarrow$ empty map
    \STATE $\text{conflict} \leftarrow \textbf{False}$
    \FOR{each path $p_f$ in combination}
        \FOR{each query node $v_f$ in $p_f$}
            \STATE $v_s \leftarrow$ mapped node of $v_f$
            \IF{$v_f \in \text{query\_node\_map}$}
                \IF{$\text{query\_node\_map}[v_f] \neq v_s$}
                    \STATE $\text{conflict} \leftarrow \textbf{True}$; \textbf{break}
                \ENDIF
            \ELSE
                \STATE $\text{query\_node\_map}[v_f] \leftarrow v_s$
                \STATE add $v_s$ to $g$ if not present
            \ENDIF
        \ENDFOR
        \STATE add all edges in $p_f$ to $g$
    \ENDFOR
    \IF{not $\text{conflict}$}
        \STATE $S \leftarrow S \cup \{g\}$
    \ENDIF
\ENDFOR
\RETURN ${S}$
\end{algorithmic}
\end{algorithm}

\end{document}